\pgfplotsset{width=8 cm,compat=1.8}
\newcommand{\be}{\begin{equation}}
	\newcommand{\ee}{\end{equation}}
\newcommand{\ba}{\begin{eqnarray}}
	\newcommand{\ea}{\end{eqnarray}}
\newcommand{\half}{\frac{1}{2}}
\newtheorem{thm}{Theorem}
\begin{document}

\title{Unbounded Sharing of Nonlocality Using Projective Measurements}

\author{S. Sasmal}
\email{souradeep.007@gmail.com}
\affiliation{Indian Institute of Technology Hyderabad, Kandi, Sangareddy, Telengana 502285, India}

\author{S. Kanjilal}
\email{som.kanjilal@inl.int}
\affiliation{International Iberian Nanotechnology Laboratory, Braga, Portugal}
\affiliation{ International Institute of Information Technology, Hyderabad, Gachibowli, Telengana 500032, India}

\author{A. K. Pan }
\email{akp@phy.iith.ac.in}
\affiliation{Indian Institute of Technology Hyderabad, Kandi, Sangareddy, Telengana 502285, India}

\begin{abstract}
It is a common perception that a sharp projective measurement in one side of the Bell experiment destroys the entanglement of the shared state, thereby preventing the demonstration of sequential sharing of nonlocality. In contrast, we introduce a local randomness-assisted projective measurement protocol, enabling the sharing of nonlocality by an arbitrary number of sequential observers (Bobs) with a single spatially separated party Alice. Subsequently, a crucial feature of the interplay between the degrees of incompatibility of observables of both parties is revealed, enabling the unbounded sharing of nonlocality. Our findings, not only offer a new paradigm for understanding the fundamental nature of incompatibility in demonstrating quantum nonlocality but also pave a new path for various information processing tasks based on local randomness-assisted projective measurement. 
\end{abstract}

\pacs{} 
\maketitle

%%%%%%%%%%%%%%%%%%%%%%%%%%%%%%%%%%%%%%%%%%%%%%%%%%%%%%%%%%%%%%%%%%%%%%%%%%%%%%%%%%%%%%%%%%%%%%%%%%%%%%%%%%%%%%%%%%%%%%%%%%%%%%%%%%%%%%%%%%%%%%%%%%%%%%%%%%%%%%%%%%%%%%%%%%%%%%%%%%%%%%%%%%%%%%%%%%%%%%%%%%%%%%%%%%%%%%%%%%%%%%%%%%%%%%%%%%%%%%%%%%%%%%%%%%%%%%%%%%%%%%%%%%%%%%%%%%%%%%%%%%%%%%%%%%%%%%%%%%%%%%%%%%%%%%%%%%%%%%%%%%%%%%%%%%%%%%%%%%%%%%%%%%%%%%%%%%%%%%%%%%%%%%%%%%%%%%%%%%%%%%%%%%%%%%%%%%%%%%%%%%%%%%%%%%%%%%%%%%%%%%%%%%%%%%%%%%%%%%%%%%%%%%%%%%%%%%%%%%%%%%%%%%%%%%%%%%%%%%%%%%%%%%%%%%%%%%%%%%%%%%%%%%%%%%%%%%%%%%%%%%%%%%%%%%%%%%%%%%%%%%%%%%%%%%%%%%%%%%%%%%%%%%%%%%%%%%%%%%%%%%%%%%%%%%%%%%%%%%%%%%%%%%%%%%%%%%%%%%%%%%%%%%%%%%%%%%%%%%%%%%%%%%%%%%%%%%%%%%%%%%%%%%%%%%%%%%%%%%%%%%%%%%%%%%%%%%%%%%%%%%

\section{Introduction} 

Measurement plays a pivotal role in quantum theory which portrays a distinctive feature from its classical counterpart. In its standard formulation, quantum measurement is represented by a set of orthonormal projectors corresponding to a Hermitian operator. However, there exist more general measurements defined in terms of positive-operator-valued measures (POVMs) satisfying the completeness relation \cite{Busch1996book}. 

Since the projective measurement extracts more information from a quantum system compared to POVMs, one may surmise that projective measurement is more useful in information processing tasks. On the contrary, there exists quite a number of information processing tasks where POVMs showcase supremacy over projective measurements, such as, quantum state discrimination \cite{Bergou2013, Fields2020}, randomness certification \cite{Acin2016, Curchod2017, Andersson2018, Pan2021, Borkala2022}, quantum tomography \cite{Derka1998}, state estimation \cite{Bergou2010}, quantum cryptography \cite{Renes2004} and many more. Another pertinent example, relevant to this work, involves the sharing of quantum correlations among multiple sequential observers.  

Silva \emph{et al.} \cite{Silva2015} first demonstrated that two sequential observers on one side of the Bell experiment can share the nonlocality based on quantum violation of the Clauser-Horn-Shimony-Holt (CHSH) inequality \cite{Clauser1969}. Subsequently, the sharing of various other forms of quantum correlations have been explored \cite{Sasmal2018, Shenoy2019, Kumari2019, Mohan2019, Anwer2021}, given its potential applications in a wide-range of quantum information processing tasks. Consequently, it has been shown \cite{Brown2020} that an arbitrary number of independent observers can sequentially share nonlocality. Notably, all these schemes employ unsharp measurement for the sequential observers. 

In the conventional sharing protocol, there is one Alice (who always performs projective measurement) and an arbitrary $k$ number of independent sequential Bobs (say, Bob$^{k}$), who perform POVMs. An initial two-qubit entangled state is shared between Alice and first Bob (Bob$^{1}$). Upon receiving the respective subsystems, Alice and Bob$^{1}$ perform measurements. Bob$^{1}$ then replays the residual subsystem to the next sequential observer Bob$^{2}$ and the process continues until the quantum violation of Bell inequality is demonstrated between Alice and Bob$^k$. It is crucial to note that each sequential Bob must perform POVM measurements. This is because the projective measurement destroys the entanglement completely and hence  Bob$^{2}$ has no way to demonstrate the nonlocality. However, a recent demonstration \cite{Steffinlongo2022} illustrated that sharing of nonlocality can be achieved up to two sequential Bobs, even with projective measurements, provided Bob$^1$ employs suitable local randomness. Specifically, the sharing of nonlocality in \cite{Steffinlongo2022} is limited to two sequential Bobs. 

This study aims to demonstrate the sharing of nonlocality between Alice and arbitrary numbers of sequential Bobs through quantum violation of the CHSH inequality \cite{Clauser1969} using projective measurements. To accomplish this, for sequential Bobs, we introduce a distinctive form of qubit projective measurement aided with local randomness which we term as probabilistic projective measurement (PPM). The classical post-processing of projective measurements by Bob leads to a specific class of POVMs that plays a crucial role in exhibiting the unbounded sharing of nonlocality. Our work distinguishes itself from \cite{Steffinlongo2022} by unveiling the intricacies involved in the interplay between the incompatibility of Bob's observables and the commutativity of Alice's observable, coupled with the PPM scheme. 

It is worth noting that for a rank-1 projector $\pi_{i}=|\psi_{i}\rangle\langle\psi_{i}|$, there is a unique way to implement the projective measurement. Hence, for the projector $\pi_{i}$ acting on an initial quantum state $\rho$, the probability of obtaining the outcome $i$ is $p(i)=Tr[\pi_{i}\rho]$, and post-measurement state $|\psi_{i}\rangle$ is unique. In contrast, a rank-1 POVM element $\mathcal{E}_{i}$ can be implemented in various ways. Albeit the probability $p(i)=Tr[\mathcal{E}_{i}\rho]$ remains the same, irrespective of the way one implements the POVM element, crucially, the post-measurement state, in general, differs. Therefore, the degree of coherence or entanglement of the reduced state may vary depending on the way the measurement of the POVM element is implemented. This feature is key in demonstrating the sharing of quantum correlations by multiple sequential observers as the statistics of these sequential observers heavily depend on the procedures that are taken to implement the POVM element. The less the state is disturbed, the more sequential observers can share the quantum correlation. The PPM introduced in this paper achieves this by controlling the disturbance caused by the measurement.  The efficacy of PPM lies in its ability to sequentially share nonlocal quantum correlation among an arbitrary number of independent sequential observers. As indicated earlier, the interplay between the joint measurability of Bob's observables and the commutativity of Alice's observable plays a crucial role.

%%%%%%%%%%%%%%%%%%%%%%%%%%%%%%%%%%%%%%%%%%%%%%%%%%%%%%%%%%%%%%%%%%%%%%%%%%%%%%%%%%%%%%%%%%%%%%%%%%%%%%%%%%%%%%%%%%%%%%%%%%%%%%%%%%%%%%%%%%%%%%%%%%%%%%%%%%%%%%%%%%%%%%%%%%%%%%%%%%%%%%%%%%%%%%%%%%%%%%%%%%%%%%%%%%%%%%%%%%%%%%%%%%%%%%%%%%%%%%%%%%%%%%%%%%%%%%%%%%%%%%%%%%%%%%%%%%%%%%%%%%%%%%%%%%%%%%%%%%%%%%%%%%%%%%%%%%%%%%%%%%%%%%%%%%%%%%%%%%%%%%%%%%%%%%%%%%%%%%%%%%%%%%%%%%%%%%%%%%%%%%%%%%%%%%%%%%%%%%%%%%%%%%%%%%%%%%%%%%%%%%%%%%%%%%%%%%%%%%%%%%%%%%%%%%%%%%%%%%%%%%%%%%%%%%%%%%%%%%%%%%%%%%%%%%%%%%%%%%%%%%%%%%%%%%%%%%%%%%%%%%%%%%%%%%%%%%%%%%%%%%%%%%%%%%%%%%%%%%%%%%%%%%%%%%%%%%%%%%%%%%%%%%%%%%%%%%%%%%%%%%%%%%%%%%%%%%%%%%%%%%%%%%%%%%%%%%%%%%%%%%%%%%%%%%%%%%%%%%%%%%%%%%%%%%%%%%%%%%%%%%%%%%%%%%%%%%%%%%%%%%

 \section{Joint measurability, commutativity and CHSH violation} 
 
 While the connection between measurement incompatibility and CHSH violation is well-studied \cite{Wolf2009, Carmeli2012, Banik2013, Busch2013, Heinosaari2016}, here we show the connection between them is more nuanced than what has already been pointed out. The essential condition for CHSH violation requires that the shared state $\rho_{1}$ is entangled as well as the observables of both Alice and Bob must not be jointly measurable. 

Let's consider the scenario where Alice performs the projective measurements of two observables $A_x\equiv \qty{A_{\pm|x} \ \big| \ A_{\pm1|x}=\frac{1}{2}\qty(\mathbb{I}+A_x)}$ with $x\in \{0,1\}$ and  Bob performs unbiased POVMs $\qty{\mathcal{E}_{\pm|y} \ \big| \ \mathcal{E}_{\pm|y}=\frac{1}{2}(\mathbb{I}\pm \eta_{y} B_y)}$ with $y\in \{0,1\}$. Here $\eta_{y}$ is the unsharpness parameter of $B_{y}$.  It is important to note that incompatibility of Alice's sharp observables is ensured by their non-commutativity, i.e. $\braket{[A_0,A_1]}\neq 0$. in contrast, the incompatibility of Bob's unsharp observables is determined by $\eta_0^2+\eta_1^2>1$ \cite{Carmeli2012}. Now, lets delve into the analysis of how incompatibility plays a role for achieving the CHSH violation.

The quantum value of CHSH functional in the aforementioned scenario is given by 
\begin{equation} \label{chshunsh}
    \mathcal{I}=\Tr[\bigg\{\eta_{0} \ (A_0+A_1)\otimes B_0 + \eta_{1} \ (A_0-A_1) \otimes B_1\bigg\}\rho_1]
\end{equation}
We evaluate $\mathcal{I}$ by invoking an elegant sum-of-squares approach, as discussed in Appx.~\ref{ap1}, devoid of assuming the dimension of the quantum system. The condition for CHSH violation is derived as
\begin{eqnarray}\label{crit}
   \eta_{0} \sqrt{2+\langle \{A_0,A_1\}\rangle_{\rho_{1}}}+\eta_{1}\sqrt{2-\langle \{A_0,A_1\}\rangle_{\rho_{1}}} > 2
\end{eqnarray}
It is worth noting that the maximum CHSH violation ($2\sqrt{2}$) is achieved  when $\eta_{0}=\eta_{1}=1$ and Alice's and Bob's measurements are mutually anti-commuting, i.e., $\{A_0,A_1\}=\{B_0,B_1\}=0$, signifying maximal incompatibility. When $\eta_{0}=\eta_{1}$, we obtain the well-known critical value \cite{Andersson2005} (above which CHSH violation occurs) of the unsharpness parameter $\eta_c=\frac{1}{\sqrt{2}}$ which matches with the joint measurability condition between the unbiased POVMs \cite{Busch1986, Carmeli2012} corresponding to $B_{y}$. Importantly, attainment of such critical value requires Alice's observables to be mutually anti-commuting, i.e., $\{A_0,A_1\}=0$, else $\eta_c$ will be higher as illustrated in Fig.~\ref{figdinc}. The more the Alice's observables approaches towards commuting, the required value of Bob's unsharpness parameter $\eta$ becomes higher. Hence, there exists a trade-off between the degree of incompatibility of Bob's measurement and the anti-commutativity of Alice's observables (see Fig.~\ref{figdinc}). 

Now comes an important observation relevant to the present work.  The preceding argument takes a non-trivial twists when $\eta_{0}=1$ and $\eta_{1}=\eta$ i.e., one of Bob's measurement is sharp and the other remains unsharp. In that case, to obtain $\mathcal{I}>2$ in Eq. (\ref{crit}), one needs 
\begin{equation}\label{mucrit}
    \eta > \frac{2-\sqrt{2+\langle \{A_0,A_1\}\rangle_{\rho}}}{\sqrt{2-\langle \{A_0,A_1\}\rangle_{\rho}}}
\end{equation}

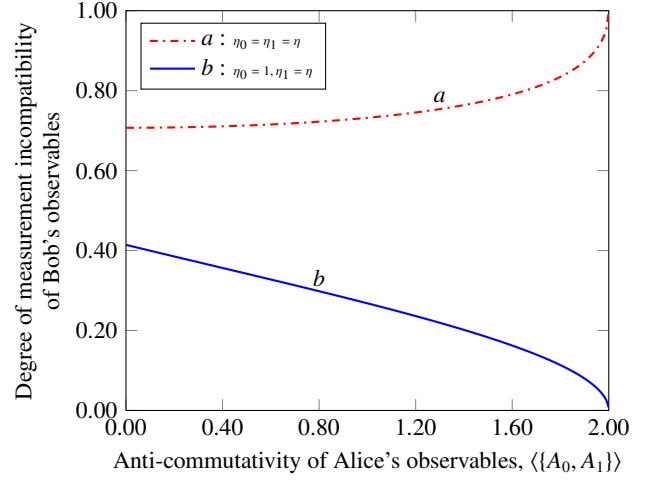
\begin{figure}[H]
	\centering
	\begin{tikzpicture}
	\begin{axis}[legend pos=north west, legend cell align=left, enlargelimits=false, xlabel={Anti-commutativity of Alice's observables, $\langle\{A_0,A_1\}\rangle$}, ylabel style ={align=center}, ylabel= {Degree of measurement incompatibility \\ of Bob's observables}, xticklabel style={ 
		/pgf/number format/fixed, /pgf/number format/fixed zerofill,
		/pgf/number format/precision=2
	}, scaled ticks=false, xtick={0,0.4,0.8,1.2,1.6,2.0}, yticklabel style={ 
		/pgf/number format/fixed, /pgf/number format/fixed zerofill,
		/pgf/number format/precision=2
	}, scaled ticks=false, ytick={0.00,0.2,0.4,0.6,0.8,1.0},xmin=0,xmax=2,ymin=0,ymax=1 
	]	

\addplot [domain=0:2, samples=200, red, thick, dashdotted] {(2)/(sqrt(2+x)+sqrt(2-x))};
\addlegendentry{\textit{a} : \tiny{$\eta_0=\eta_1=\eta$}}

\addplot [domain=0:2, samples=200, blue, thick] {(2-sqrt(2+x))/(sqrt(2-x))};
\addlegendentry{\textit{b} : \tiny{$\eta_0=1, \eta_1=\eta$}}

\node[above] at (130,75) {$a$};
\node[above] at (80,29) {$b$};

	\end{axis}
 
\end{tikzpicture}

	\caption{The plot illustrates the trade-off between incompatibility of Bob's observables and the anti-commutativity of Alice's observables in achieving the CHSH violation. The red dashed curve `a' signifies this trade-off when Bob's two observables are unsharp with equal unsharpness parameter ($\eta_{0}=\eta_{1}$). The blue curve `b' represents this trade-off in the scenario where one of Bob's observable is sharp ($\eta_{0}=1$) and the other is unsharp. While in the former case, the critical value of the degree of incompatibility monotonically increases with the anti-commutativity of Alice's observables, in the latter, degree of incompatibility monotonically decreases with the anti-commutativity.} 
	\label{figdinc}
\end{figure}

It is evident from the above Eq.~(\ref{mucrit}) that in order to demonstrate CHSH violation, the critical value of $\eta$ reduces to $\eta_c >0$ which is again the same as obtained \cite{Carmeli2012} for the measurement incompatibility POVMs corresponding to $B_0$ (Sharp) and $B_1$ (unsharp). Surprisingly, such a critical value is achieved only when Alice's measurements are almost commuting, i.e., $\{A_0,A_1\}\approx 2$ or $[A_0,A_1]\approx0$. Thus, the CHSH violation will be obtained if Alice measurements are non-commuting and $\eta>0$. The trade-off between the critical value of Bob's unsharpness parameter and anti-commutativity of Alice's observables in achieving the CHSH violation are depicted in Fig.~\ref{figdinc}.

An intriguing observation emerges from the preceding discussion is that violation of the CHSH inequality is possible even with nearly-commuting measurement observables for Alice and a nearly-zero unsharp parameter, signifying almost compatible measurements for both Alice and Bob. Such choices of measurements then allow the system to be minimally disturbed and preserve maximum quantum correlations, a key feature for a subsequent violation between Alice and the next sequential Bob. This feature along with the PPM scheme, plays a decisive role in demonstrating the sequential CHSH violations by an arbitrary number of observers.

%%%%%%%%%%%%%%%%%%%%%%%%%%%%%%%%%%%%%%%%%%%%%%%%%%%%%%%%%%%%%%%%%%%%%%%%%%%%%%%%%%%%%%%%%%%%%%%%%%%%%%%%%%%%%%%%%%%%%%%%%%%%%%%%%%%%%%%%%%%%%%%%%%%%%%%%%%%%%%%%%%%%%%%%%%%%%%%%%%%%%%%%%%%%%%%%%%%%%%%%%%%%%%%%%%%%%%%%%%%%%%%%%%%%%%%%%%%%%%%%%%%%%%%%%%%%%%%%%%%%%%%%%%%%%%%%%%%%%%%%%%%%%%%%%%%%%%%%%%%%%%%%%%%%%%%%%%%%%%%%%%%%%%%%%%%%%%%%%%%%%%%%%%%%%%%%%%%%%%%%%%%%%%%%%%%%%%%%%%%%%%%%%%%%%%%%%%%%%%%%%%%%%%%%%%%%%%%%%%%%%%%%%%%%%%%%%%%%%%%%%%%%%%%%%%%%%%%%%%%%%%%%%%%%

\section{Probabilistic projective measurement (PPM) scheme} 

Let Bob$^{k}$ performs the projective measurements of the observable $B_y \equiv \Big\{B_{\pm|y} \ \big| \ B_{\pm|y}=\frac{1}{2}\qty(\mathbb{I}\pm B_y)\Big\}$, where $B_{\pm|y}$ are the respective projectors corresponding to outcomes $\pm 1$. Additionally, assume that Bob$^{k}$ possesses a biased classical coin that yields heads with probability $\alpha_k$ and tails with probability $(1-\alpha_k)$. Bob uses the coin solely when he receives the input $y=1$, i.e., while performing the measurement of the observable $B_{1}$. If the head occurs, he implements the projective measurement $B_{\pm|1}$, inducing maximum disturbance to the state. If the tail occurs, he ``does nothing" ( i.e., performs $\mathbb{I}$), leaving the state undisturbed. This procedure implements the PPM in our work. The above scenario can be successfully captured by the three Kraus operators, given by 
\begin{equation}\label{fkpkmt}  
    \mathcal{K}^{k}_{1}=\sqrt{\alpha_k} \ B_{+|1} \ ; \ \mathcal{K}^{k}_{2}=\sqrt{\alpha_k} \ B_{-|1} \ ; \ \mathcal{K}^{k}_{3}=\sqrt{(1-\alpha_k)} \ \mathbb{I} \ ,
\end{equation}

Bob$^{k}$ collects his measurement statistics by classical post-processing of projective measurements. Hence, he effectively measures the following POVMs   
\begin{equation} \label{ppmim}
    \mathcal{E}^k_{+|1}=(\mathcal{K}^{k}_1)^{\dagger}(\mathcal{K}^{k}_1)+(\mathcal{K}^{k}_3)^{\dagger}(\mathcal{K}^{k}_3) \ ; \ \mathcal{E}^k_{-|1}=(\mathcal{K}^{k}_2)^{\dagger}(\mathcal{K}^{k}_2)
\end{equation}
which in turn gives
\begin{eqnarray}\label{ppm}
    \mathcal{E}_{+|1}^{k} = \alpha_k \ B_{+|1} + (1-\alpha_k) \ \mathbb{I}; \ \ \     \mathcal{E}_{-|1}^{k} = \alpha_k \ B_{-|1}
\end{eqnarray}
given that $\alpha_k\in[0,1]$, it follows that $\mathcal{E}^{k}_{\pm|1}\geq 0$. Moreover, by construction $\mathcal{E}_{+|1}^{k}+\mathcal{E}_{-|1}^{k}=\mathbb{I}$. The above argument can also be interpreted in the following manner: among the many possible implementations of a POVM, the POVMs in Eq. (\ref{ppm}) are implemented through the specific Kraus operators defined in Eq. (\ref{fkpkmt}).  
The probability of a outcome $\pm$ of Bob$^{k}$'s measurement is calculated as $p(\pm|B_1,\rho) = \Tr[\rho \ \mathcal{E}_{\pm|1}^k]$. Crucially, $p(\pm|B_1,\rho)$ remains unchanged regardless of how a POVM element is realised, although the reduced state may vary. This feature plays a distinctive role in our work, as mentioned earlier.

It is essential to underscore that the outlined PPM scheme differs fundamentally from the unsharp measurement formalism. In the latter, the unsharpness parameter originates from the indistinguishability of the apparatus's quantum states. In contrast, the parameter $\alpha_k$ in the PPM scheme solely arises from classical local randomness having no quantum origin. While any non-extremal POVMs can be simulated \cite{Oszmaniec2017} through the classical post-processing of projective measurements, it might be assumed that unbiased POVMs realised through the projective measurements could also lead to unbounded sharing of nonlocality. However, in such cases, the projective measurement completely disturbs the state, leaving no residual entanglement and, consequently, no sequential CHSH violation. This feature is straightforward and easily verifiable. The significance of our PPM scheme lies in its distinctiveness, enabling the  sharing of nonlocality for unbounded sequential Bobs.

%%%%%%%%%%%%%%%%%%%%%%%%%%%%%%%%%%%%%%%%%%%%%%%%%%%%%%%%%%%%%%%%%%%%%%%%%%%%%%%%%%%%%%%%%%%%%%%%%%%%%%%%%%%%%%%%%%%%%%%%%%%%%%%%%%%%%%%%%%%%%%%%%%%%%%%%%%%%%%%%%%%%%%%%%%%%%%%%%%%%%%%%%%%%%%%%%%%%%%%%%%%%%%%%%%%%%%%%%%%%%%%%%%%%%%%%%%%%%%%%%%%%%%%%%%%%%%%%%%%%%%%%%%%%%%%%%%%%%%%%%%%%%%%%%%%%%%%%%%%%%%%%%%%%%%%%%%%%%%%%%%%%%%%%%%%%%%%%%%%%%%%%%%%%%%%%%%%%%%%%%%%%%%%%%%%%%%%%%%%%%%%%%%%%%%%%%%%%%%%%%%%%%%%%%%%%%%%%%%%%%%%%%%%%%%%%%%%%%%%%%%%%%%%%%%%%%%%%%%%%%%%%%%%%%%%%%%%%%%%%%%%%%%%%%%%%%%%%%%%%%%%%%%%%%%%%%%%%%%%%%%%%%%%%%%%%%%%%%%%%%%%%%%%%%%%%%%%%%%%%%%%%%%%%%%%%%%%%%%%%%%%%%%%%%%%%%%%%%%%%%%%%%%%%%%%%%%%%%%%%%%%%%%%%%%%%%%%%%%%%%%%%%%%%%%%%%%%%%%%%%%%%%%%%%%%%%%%%%%%%%%%%%%%%%%%%%%%%%%%%%%
 
\section{Sharing of nonlocality using PPM}

We showcase how Bob's implementation of PPM, coupled with Alice's near-commuting projective measurements, facilitates the sharing of nonlocality between Alice and an unbounded number of independent sequential Bobs. As mentioned earlier, Bob employs the PPM scheme (refer to Eqs.~(\ref{ppm}) and (\ref{fkpkmt})) only when measuring the observable $B_{1}$; for $B_0$, he performs the standard projective measurement. The effective observable corresponding to $B_{1}$ can be expressed as $\mathcal{B}_1^k=\mathcal{E}_{+|1}^{k}-\mathcal{E}_{-|1}^{k}=\alpha_k B_1 + (1-\alpha_k) \mathbb{I}$. 

Given the above scenario, the quantum value of the CHSH functional between Alice and Bob$^k$ is given by
\begin{eqnarray}\label{chshs11}
\mathcal{I}^{k}&=&\Tr[\bigg\{\qty(A_0+A_1) \otimes B_0\bigg\} \rho_{k}] + \alpha_k \Tr[\bigg\{(A_0-A_1) \otimes  B_1\bigg\} \rho_k] \nonumber \\
&&+ (1-\alpha_k)\Tr[(A_0-A_1) \otimes \mathbb{I} \ \rho_k]
\end{eqnarray}
where $\rho_k$ represents the reduced state after $(k-1)^{th}$ Bob's measurement, evaluated by the Kraus evolution, as (see Appx.~(\ref{ap3})):
\begin{eqnarray}
\rho_{k} &=&\frac{1}{4} \Bigg[ (3-\alpha_{k-1}) \ \rho_{k-1} + \qty( \mathbb{I} \otimes B_{0}) \  \rho_{k-1} \qty( \mathbb{I} \otimes B_{0})  \nonumber \\
&&+ \alpha_{k-1} \ \qty( \mathbb{I} \otimes B_{1}) \  \rho_{k-1} \qty( \mathbb{I} \otimes B_{1}) \Bigg]
\end{eqnarray}

Note that, the CHSH value between Alice and Bob$^{1}$ is given by $\Tr[\{\qty(A_0+A_1) B_0\} \rho_{1}] + \alpha_1 \Tr[\bigg\{(A_0-A_1) B_1\} \rho_1] + (1-\alpha_1)\Tr[(A_0-A_1) \rho_1]$. For Bell diagonal states, the last term does not contribute to the CHSH value. In such a case, the CHSH value $\mathcal{I}^{1}$ resembles the expression given by Eq.~(\ref{crit}) with $\eta_0=1$ and $\eta_1=\alpha_1$. Hence, the critical value of $\alpha_1>0$ is sufficient to ensure the CHSH violation provided $[A_0,A_{1}]\to 0$, as determined by Eq.~(\ref{mucrit}).

For our purpose, we fix the following two-qubit  observables for Alice and Bob, and the two-qubit entangled state shared between them. 
\begin{eqnarray}
     A_0&=&\sin \delta \ \sigma_z +\cos \delta \ \sigma_x \ ; \ \ A_1 = -\sin\delta \ \sigma_z +\cos\delta \ \sigma_x \ ; \nonumber \\
    B_0 &=& \sigma_x \ ; \ \hspace{2.3 cm} {B}_1 =  \sigma_z \ ; \nonumber \\
    && \hspace{0.8 cm} \ket{\psi}_{1}=\cos\theta  \ \ket{00}+\sin\theta \ \ket{11} \label{sostpmmt}
\end{eqnarray}
where $0\leq \theta \leq \frac{\pi}{4}$ and $0\leq \delta\leq \frac{\pi}{2}$. %Now, in order to evaluate $\mathcal{I}^k$ given by Eq.~(\ref{chshs11}), we initially calculate the terms $\Tr[\qty{\qty(A_0+A_1) \otimes B_0}  \ \rho_{k}]$, $\Tr[\qty{(A_0-A_1) \otimes B_1} \ \rho_{k}]$ and $\Tr[\qty{(A_0-A_1) \otimes \mathbb{I}_2}  \ \rho_{k}]$. Subsequently, considering the following observables for Alice and Bob$^k$ and initial shared pure two-qubit entangled state: 
We evaluate the quantum expression for the CHSH functional as
\begin{eqnarray} \label{chshk}
\mathcal{I}^k\qty(\delta,\theta)&=& 2 \ \Bigg[\cos \delta  \sin2\theta \prod\limits_{j=1}^{k-1} \qty(1-\frac{\alpha_j}{2}) + \sin\delta\cos2\theta \nonumber \\
&& + \frac{\alpha_k}{2^{k-1}} \ \sin\delta \qty{1-2^{k-1}\cos2\theta} \Bigg] 
\end{eqnarray}
The details of the derivation are quite lengthy which is deferred to Appx.~\ref{ap3}.

To achieve CHSH violation for unbounded sequential Bobs, it is imperative to perturb the system minimally in order to preserve quantum correlations as much as possible requiring the value of $\alpha_{k}$ to be as minimum as possible. In view of Eq. (\ref{mucrit}) and the subsequent discussion, it indicates the need for a smaller degree of non-commutativity of Alice's observables. This feature is also depicted in Fig.~\ref{figdinc}. For the observables in Eq.~(\ref{sostpmmt}), one finds $[A_0,A_1]=2 i \sin2\delta \ \sigma_y$. Thus, we demonstrate the unbounded sequential sharing of nonlocality (i.e., $\mathcal{I}^k\qty(\delta,\theta)>2 \ \forall k$) in the specific regime of $\delta\rightarrow 0$, i.e., when Alice's observables are nearly commuting.

%%%%%%%%%%%%%%%%%%%%%%%%%%%%%%%%%%%%%%%%%%%%%%%%%%%%%%%%%%%%%%%%%%%%%%%%%%%%%%%%%%%%%%%%%%%%%%%%%%%%%%%%%%%%%%%%%%%%%%%%%%%%%%%%%%%%%%%%%%%%%%%%%%%%%%%%%%%%%%%%%%%%%%%%%%%%%%%%%%%%%%%%%%%%%%%%%%%%%%%%%%%%%%%%%%%%%%%%%%%%%%%%%%%%%%%%%%%%%%%%%%%%%%%%%%%%%%%%%%%%%%%%%%%%%%%%%%%%%%%%%%%%%%%%%%%%%%%%%%%%%%%%%%%%%%%%%%%%%%%%%%%%%%%%%%%%%%%%%%%%%%%%%%%%%%%%%%%%%%%%%%%%%%%%%%%%%%%%%%%%%%%%%%%%%%%%%%%%%%%%%%%%%%%%%%%%%%%%%%%%%%%%%%%%%%%%%%%%%%%%%%%%%%%%%%%%%%%%%%%%%%%%%%%%

\subsection{Main results} 

There exists pure entangled two-qubit states and suitable measurements for which nonlocality can be simultaneously shared between Alice and an arbitrary number of independent sequential Bobs through the proposed PPM scheme.

Details of the proof are presented in Appx.~\ref{ap3}. Here, we present the outline of the proof. We start by analysing the condition leading to the CHSH violation between Alice and Bob$^1$, i.e., $\mathcal{I}^1\qty(\delta,\theta)>2$. This implies that
\begin{equation}
   \alpha_1 > \frac{1-\sin(2\theta+\delta)}{2\sin\delta\sin^2\theta}
\end{equation}
Note that the lower bound of $\alpha_1$ will be zero if we choose $\theta=\frac{\pi}{4}-\frac{\delta}{2}$. Thus, Alice-Bob$^1$ will always obtain CHSH violation for all $\delta \in [0,\frac{\pi}{2}]$ for such choice of $\theta$.

Subsequently, sequential CHSH violations between Alice-Bob$^2$, Alice-Bob$^3$,.. Alice-Bob$^k$ narrowed down the range of $\delta$, thereby restricting the fraction of pure entangled two-qubit state for which each $\alpha_k$ lies in between $0$ and $1$. For instance, the CHSH violation between Alice-Bob$^2$, i.e., $\mathcal{I}^{2}{\qty(\delta,\theta)}>2$ implies $\delta$ must lie within a specific range, given by $0<\delta<\frac{\pi}{6}$ (see Appx.~\ref{ap3}). This implies only a certain class of pure entangled states will give rise to $\mathcal{I}^{2}{\qty(\delta,\theta)}>2$.

Furthermore, inserting $\theta= \frac{\pi}{4}-\frac{\delta}{2}$ in the quantum state for chosen measurement settings in Eq.~(\ref{sostpmmt}), we have
\begin{equation} \label{kpm1}
\mathcal{I}^{k}{\qty(\delta,\theta)}>2 \iff \alpha_k >  \frac{2^{k-1}\cos^2\delta \ \qty(1- \prod\limits_{j=1}^{k-1} \qty(1-\frac{\alpha_j}{2}))}{ \sin\delta \ \qty(1-2^{k-1} \sin\delta)}
\end{equation}
for arbitrary $k\geq 2$. Therefore, if the lower bound given by Eq.~(\ref{kpm1}) lies between $0$ and $1$ for arbitrary $k$ then $\mathcal{I}^{k}{\qty(\delta,\theta)}>2$. To be precise, we need to show that, for arbitrary $k\geq 2$ there exists an $\alpha_k$ such that
\begin{eqnarray} \label{kpm2}
    0< \frac{2^{k-1}\cos^2\delta \ \qty(1- \prod\limits_{j=1}^{k-1} \qty(1-\frac{\alpha_j}{2}))}{ \sin\delta \ \qty(1-2^{k-1} \sin\delta)} < \alpha_k \leq 1 
\end{eqnarray}

Note that, the lower bound in Eq.~(\ref{kpm2}) lies between $0$ and $1$ if $0<\delta<\sin^{-1}\frac{1}{2^{k-1}}$, consequently the range of $\theta$ is given by $\qty(\frac{\pi}{4}-\frac{1}{2}\sin^{-1}\frac{1}{2^{k-1}})<\theta< \frac{\pi}{4}$. This implies that the pure non-maximally two-qubit entangled state with a specified range of $\theta$  provides  $\mathcal{I}^{k}{\qty(\delta,\theta)}>2$. For the measurement settings and strategy adopted here, the range of $\theta$ narrows down with an increasing value of $k$. Nonetheless, we prove the following theorem. 
\begin{thm}\label{thm1}
 For any arbitrary $k\geq 2$ there exist pure entangled two-qubit states, characterized by the concurrence, $\mathcal{C}=\sin2\theta < 2^{1-k}\sqrt{4^{k-1}-1}$, such that there exists a sequence 
$ \{s_{1},s_{2},\ldots,s_{k}\}$ such that 
\begin{eqnarray} \label{kpmmt}
    0<\frac{2^{k-1}\cos^2\delta \ \qty(1- \prod\limits_{j=1}^{k-1} \qty(1-\frac{\alpha_j}{2}))}{ \sin\delta \ \qty(1-2^{k-1} \sin\delta)}<s_l<1
\end{eqnarray}
    for all $2\leq l\leq k$ and $s_1>0$.
\end{thm}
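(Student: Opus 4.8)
The plan is to turn the statement into an explicit construction of one angle $\delta$ together with an inductively chosen sequence, and then to read off (\ref{kpmmt}) level by level. Throughout I fix $\theta=\frac{\pi}{4}-\frac{\delta}{2}$, so that the state's concurrence is $\mathcal{C}=\sin2\theta=\cos\delta$ and (\ref{kpm1}) applies; I identify $\alpha_j$ with $s_j$ and, reading (\ref{kpmmt}) at each level $l$ (i.e. with $k$ replaced by $l$, as the sequential violations Alice--Bob$^{1}$, \ldots, Alice--Bob$^{k}$ require), I set $Q_{l-1}=1-\prod_{j=1}^{l-1}(1-\tfrac{s_j}{2})$ and $F_l=\frac{2^{l-1}\cos^2\delta\,Q_{l-1}}{\sin\delta\,(1-2^{l-1}\sin\delta)}$, so that the assertion to be proved is the existence of $s_1>0$ and $s_2,\dots,s_k$ with $0<F_l<s_l<1$ for every $2\le l\le k$.

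\emph{Step 1 (the feasible window).} First I determine when every denominator $1-2^{l-1}\sin\delta$ with $l\le k$ is strictly positive; since $2^{l-1}\sin\delta$ is largest at $l=k$, this is the single condition $\sin\delta<2^{1-k}$. Using $\mathcal{C}=\cos\delta$ and the elementary identity $2^{1-k}\sqrt{4^{k-1}-1}=\sqrt{1-4^{1-k}}$, this is exactly the condition that the concurrence sit on the side of the critical value $2^{1-k}\sqrt{4^{k-1}-1}$ singled out in the theorem. In this window $\cos^2\delta>0$ and $Q_{l-1}\ge Q_1=\tfrac{s_1}{2}>0$, so the left inequality $0<F_l$ is automatic once $s_1>0$; the whole content therefore reduces to making each interval $(F_l,1)$ nonempty, i.e. to proving $F_l<1$ for all $l\le k$.

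\emph{Step 2 (greedy induction).} With $\delta$ frozen in the window I build the sequence greedily: set $F_1=0$ (empty product), pick $s_1>0$ to be fixed at the end, and, given admissible $s_1,\dots,s_{l-1}$, note that $F_l$ is then determined through $Q_{l-1}$; if $F_l<1$ I choose $s_l$ just above $F_l$, which keeps the increment in $Q_l=Q_{l-1}+(1-Q_{l-1})\tfrac{s_l}{2}$ as small as the constraint allows. The recursion for $Q_l$ together with $s_l\gtrsim F_l\sim 2^{l-1}Q_{l-1}/\sin\delta$ shows that $Q_{l}$ is amplified by a factor of order $2^{l-2}/\sin\delta$ per step, so iterating gives a closed telescoping bound $Q_{k-1}\lesssim s_1\prod_{l=2}^{k-1}\bigl(2^{l-2}/\sin\delta\bigr)$. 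Because $s_1$ is free, I fix it small enough that this bound is below $\sin\delta/2^{k-1}$, which is precisely the inequality $F_k<1$. Finally, since $2^{l-1}$ and $Q_{l-1}$ increase and $1-2^{l-1}\sin\delta$ decreases in $l$, the map $l\mapsto F_l$ is monotone, so $F_k<1$ propagates to $F_l<1$ for every $2\le l\le k$; all intervals $(F_l,1)$ are then nonempty and the greedy choices are consistent, producing the required sequence.

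\emph{Main obstacle.} The only genuine difficulty is the competition in Step 2 between the factor $2^{l-1}$ in the numerator of $F_l$ and the disturbance $Q_{l-1}$, whose increments cannot be made smaller than $\sim s_{l-1}/2$, so that $Q_{l-1}$ is forced to grow geometrically and $2^{l-1}Q_{l-1}$ threatens to overrun the shrinking denominator $\sin\delta(1-2^{l-1}\sin\delta)$ before level $k$. The resolution rests on two facts already visible above: the tower has fixed finite height $k$, and $s_1$ is a free parameter; choosing $\delta$ so that $\sin\delta$ sits comfortably below $2^{1-k}$ and then taking $s_1$ exponentially small in $k$ absorbs the geometric growth and secures $F_k<1$. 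Turning the heuristic amplification factor into a rigorous telescoping estimate — producing an explicit admissible $s_1=s_1(\delta,k)$ — is the one real computation, and it is exactly here that the critical concurrence $2^{1-k}\sqrt{4^{k-1}-1}$ appears as the sharp boundary at which the denominator $1-2^{k-1}\sin\delta$ vanishes and no choice of $s_1$ can keep the intervals nonempty.
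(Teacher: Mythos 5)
Your proposal is correct and follows essentially the same route as the paper's own proof in Appendix~C: fix $\theta=\frac{\pi}{4}-\frac{\delta}{2}$ inside the window $\sin\delta<2^{1-k}$ (so all denominators $1-2^{l-1}\sin\delta$ stay positive), define the sequence greedily as sitting just above the level-$l$ lower bound, show the resulting $F_l$ (equivalently $s_l$) is increasing in $l$, and exploit the freedom in $s_1=\alpha_1$ to push the whole finite tower below $1$ --- the paper packages your telescoping estimate as the statement that each $s_l$ is a polynomial in $\alpha_1$ with lowest-order term linear, so $s_l\to 0^{+}$ as $\alpha_1\to 0^{+}$, which is the same mechanism in different clothing. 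One incidental remark: your window translates to $\mathcal{C}=\cos\delta>2^{1-k}\sqrt{4^{k-1}-1}$, which agrees with the paper's derivation and its stated range $\frac{\pi}{4}-\frac{1}{2}\sin^{-1}\frac{1}{2^{k-1}}<\theta<\frac{\pi}{4}$, so the reversed inequality in the theorem's printed concurrence condition is a typo that your argument silently corrects.
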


\textit{Sketch of the proof:-}  The detailed proof is quite lengthy and therefore deferred to  Appx.~\ref{ap3}. Here we explain the main steps to construct the proof.  Given an arbitrary $k$, we assume a sequence $\{s_{l}: \ 1\leq l\leq k\}$ satisfying the second inequality in the Eq.~(\ref{kpmmt}) and show that $(a)$ any finite sub-sequence is monotonically increasing, i.e. $s_{k}>s_{k-1}>\ldots>s_{1}$ and $(b)$ each term individually reaches zero in the limit i.e. $\alpha_1\to 0^{+} \implies s_{l}\to 0^{+}$ for all $1\leq l\leq k$ (see Appx.~\ref{ap3}). Since, the sequence is chosen in such a way that it satisfies the second inequality of Eq.~(\ref{kpmmt}), $(a)$ and $(b)$ together imply that it will lie between $0$ and $1$ for all $1\leq l\leq k$, given an arbitrary $k$, thus proving our claim.

%%%%%%%%%%%%%%%%%%%%%%%%%%%%%%%%%%%%%%%%%%%%%%%%%%%%%%%%%%%%%%%%%%%%%%%%%%%%%%%%%%%%%%%%%%%%%%%%%%%%%%%%%%%%%%%%%%%%%%%%%%%%%%%%%%%%%%%%%%%%%%%%%%%%%%%%%%%%%%%%%%%%%%%%%%%%%%%%%%%%%%%%%%%%%%%%%%%%%%%%%%%%%%%%%%%%%%%%%%%%%%%%%%%%%%%%%%%%%%%%%%%%%%%%%%%%%%%%%%%%%%%%%%%%%%%%%%%%%%%%%%%%%

\subsection{Comparison with related works}  

In \cite{Steffinlongo2022}, by invoking suitable local random variable $\lambda_{B}\in\{0,1\}$ (with the probability of occurring $\lambda_{B}=0$ is $q$) for Bob$^{1}$, a projective measurement based sequential sharing of nonlocality protocol has been devised. If $\lambda_{B}=0$, Bob$^{1}$ performs projective measurement of the observables $B_{0}^{\lambda_{B}=0}$ and $B_{1}^{\lambda_{B}=0}$, but If $\lambda_{B}=1$, Bob$^{1}$ performs $\mathbb{I}$ and the projective measurement of $B_{1}^{\lambda_{B}=1}$. The effective CHSH value between Alice and Bob$^1$ is then evaluated as a convex mixture of two separately computed CHSH values. There is another key step used in their protocol - an ad hoc manipulation of the post-measurement state. Specifically, a set of six unitary operations is applied to the post-measurement state pertaining to the outcome of Bob's measurements and the value of the random variable. The above steps enable them to construct the Kraus operators to derive the post-measurement state for Alice and Bob$^{2}$. This constitutes a different form of PPM scheme than ours. Note that, the sharing of nonlocality is restricted up to Bob$^{2}$ who performs a different set of observables than Bob$^{1}$. 

In contrast, in our PPM scheme, each sequential Bob performs projective measurements of the observable $B_{0}$ upon receiving the input $y=0$, but uses local randomness $\lambda_{B}\in \{0,1\}$ (with the probability of occurrence of $\lambda_{B}=0$ is $\alpha_{k}$) upon receiving the input $y=1$. Specifically, when $\lambda_{B}=0$ he applies $\mathbb{I}$; when $\lambda_{B}=1$ he performs the projective measurement of $B_{1}$. Effectively, for $y=1$, each Bob collects the statistics of the observable $\alpha_{k}B_{1} + (1-\alpha_{k})\mathbb{I}$. Note that there exist numerous sets of Kraus operators to realize the statistics of $\alpha_{k}B_{1} + (1-\alpha_{k})\mathbb{I}$ that remains the same irrespective of the way the Kraus operators are constructed. Crucially, the different sets of Kraus operators produce different post-measurement states having variable degrees of residual entanglement. In our PPM scheme, we construct an elegant set of Kraus operators that minimally disturb the system, ensuring sufficient residual entanglement to be shared by unbounded sequential Bobs.

An additional feature of our work is the construction of a two-parameter $(\alpha, v)$ POVM is that we have exhibited that if the same POVM is implemented through different Kraus operators (considering the implementation with two Kraus operators instead of three for the earlier projective measurement scheme), we obtain the previous result \cite{Brown2020} of unbounded sharing through unsharp measurement formalism as a special case. Specifically, we have derived a range of values of for the parameter of $v$ for which unbounded sharing is feasible. To be more precise, we have formulated the following theorem.

\begin{thm}\label{thm2}
 If the POVM defined by Eq.~(\ref{ppm}) is implemented through two Kraus operators given by  $\mathcal{K}_{+}^k= \sqrt{v (1-\alpha_k)+\alpha_k}B_{+|1}+\sqrt{v(1-\alpha_k)} B_{-|1}$ and $\mathcal{K}_{-}^k= \sqrt{(1-v)(1-\alpha_k)}B_{+|1}+\sqrt{1-v(1-\alpha_k)} B_{-|1}$ with the POVM elements $\mathcal{E}^k_{\pm}=(\mathcal{K}_{\pm}^k)^{\dagger}(\mathcal{K}_{\pm}^k)$, then for the case of $\theta=\frac{\pi}{4}$ and $0.058<v\leq 0.942$, an arbitrary number of Bobs can share the Bell nonlocality with a single Alice.
\end{thm}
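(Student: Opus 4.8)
The plan is to adapt the machinery behind Theorem~\ref{thm1}, replacing the three--Kraus disturbance of Eq.~(\ref{fkpkmt}) by that of the two--Kraus instrument $\{\mathcal{K}^{k}_{\pm}\}$. The first step is bookkeeping. Since $B_{\pm|1}$ are orthogonal rank--one projectors, $\mathcal{K}^{k}_{\pm}$ are simultaneously diagonal in the $B_{1}$ eigenbasis, so one verifies at once that $(\mathcal{K}^{k}_{+})^{\dagger}\mathcal{K}^{k}_{+}+(\mathcal{K}^{k}_{-})^{\dagger}\mathcal{K}^{k}_{-}=\mathbb{I}$, that the effective observable is $\mathcal{B}^{k}_{1}=\alpha_{k}B_{1}+\beta_{k}\mathbb{I}$ with $\beta_{k}=2v(1-\alpha_{k})+\alpha_{k}-1$, and that the sole effect of the instrument on the reduced state is to multiply the $B_{1}$--basis coherence of Bob's qubit by the single factor $\kappa(v,\alpha_{k})=\sqrt{v(1-\alpha_{k})\big(v(1-\alpha_{k})+\alpha_{k}\big)}+\sqrt{(1-v)(1-\alpha_{k})\big(1-v(1-\alpha_{k})\big)}$. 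Three structural facts, which I would establish first, drive the rest: (i) the $B_{1}$--weight of $\mathcal{B}^{k}_{1}$ is $\alpha_{k}$ \emph{independently of} $v$, so that at $\theta=\pi/4$ the term $\beta_{k}$ drops out of the CHSH functional and $v$ enters \emph{only} through $\kappa$; (ii) $\kappa(v,\alpha)=\kappa(1-v,\alpha)$; and (iii) $\kappa$ is maximal at $v=\tfrac12$, where $\kappa=\sqrt{1-\alpha^{2}}$ and the instrument reduces to the L\"uders unsharp instrument, recovering Ref.~\cite{Brown2020}. Fact (ii) already pins the admissible set of $v$ to an interval symmetric about $\tfrac12$, matching $0.942\approx 1-0.058$.

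The second step is to feed $\kappa$ into the sequential evolution. Rerunning the Kraus propagation of Appx.~\ref{ap3} with the substitution $\alpha_{k-1}\mapsto 1-\kappa(v,\alpha_{k-1})$ leaves the state Bell--diagonal, with the $\sigma_{x}$--correlator attenuated by $\tfrac12\big(1+\kappa(v,\alpha_{j})\big)$ per round (in place of $1-\tfrac{\alpha_{j}}{2}$) and the $\sigma_{z}$--correlator simply halved per round by the sharp $B_{0}$ measurement. Specialising Eq.~(\ref{chshk}) to $\theta=\pi/4$ then collapses it to $\mathcal{I}^{k}=2\cos\delta\prod_{j=1}^{k-1}\tfrac12\big(1+\kappa(v,\alpha_{j})\big)+2\alpha_{k}\sin\delta\,2^{-(k-1)}$, and $\mathcal{I}^{k}>2$ becomes a lower bound on $\alpha_{k}$ of exactly the shape of Eq.~(\ref{kpm1}), with $\prod_{j}\tfrac12(1+\kappa_{j})$ playing the role of $\prod_{j}(1-\tfrac{\alpha_{j}}{2})$ and with Alice's angle $\delta$ shrinking like $2^{-(k-1)}$. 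The decisive quantitative input is that for every $v\in(0,1)$ the linear term cancels, $1-\kappa(v,\alpha)=c(v)\alpha^{2}+O(\alpha^{3})$ with $c(v)=\tfrac12+\tfrac{(1-2v)^{2}}{8v(1-v)}$: the accumulated disturbance entering the bound is therefore $O\big(c(v)\sum_{j}\alpha_{j}^{2}\big)$, quadratically small in the $\alpha_{j}$, whereas the three--Kraus scheme suffers the merely linear suppression $1-\kappa=\alpha$. I would argue that this upgrade from linear to quadratic smallness is exactly what allows the maximally entangled state $\theta=\pi/4$ to support the violation, in contrast to the $\theta\to\pi/4$ limit imposed in Theorem~\ref{thm1}.

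The last and hardest step is to produce, for every $k$, an admissible sequence $\{\alpha_{l}\}_{l\le k}\subset(0,1]$. Following the template announced after Theorem~\ref{thm1}, I would posit a sequence $\{s_{l}\}$ saturating the violation bound and prove that (a) it is monotone and (b) $s_{l}\to 0^{+}$ as $\alpha_{1}\to 0^{+}$, so that a small enough $\alpha_{1}$ drives every $\alpha_{l}$ strictly inside $(0,1]$; here the quadratic estimate is essential, because the saturating recursion closes as $\alpha^{2}_{l}=4^{\,l-1}c(v)\sum_{j<l}\alpha^{2}_{j}$, whose solution stays proportional to $\alpha^{2}_{1}$ and hence can be made arbitrarily small --- the feature that fails for the linear (three--Kraus, $v\to1$) disturbance. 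The feasible $v$--window is then the range over which this construction does not terminate: since $c(v)\to\infty$ as $v\to 0^{+}$ or $v\to 1^{-}$, beyond a threshold the coherence factor is too weak for the saturating sequence to stay below $1$ uniformly in $k$, and performing the estimate sharply is what yields $v_{c}\approx 0.058$ together with its mirror $1-v_{c}\approx 0.942$. I expect this final balancing --- certifying non-termination for all $k$ and locating the endpoints exactly --- to be the crux, with the $v=\tfrac12$ case available throughout as a consistency check against Ref.~\cite{Brown2020}.
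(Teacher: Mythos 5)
Your route is the paper's route, and your quantitative bookkeeping checks out: your $\kappa(v,\alpha)$ is exactly the paper's $\xi_j$ of Eq.~(\ref{xijc3}) (including the symmetry $\kappa(v,\alpha)=\kappa(1-v,\alpha)$ and the value $\sqrt{1-\alpha^2}$ at $v=\tfrac{1}{2}$); your coefficient $c(v)=\tfrac{1}{2}+\tfrac{(1-2v)^2}{8v(1-v)}$ simplifies to $\tfrac{1}{8v(1-v)}$, matching Eq.~(\ref{xijc31}); your CHSH expression at $\theta=\pi/4$ is Eq.~(\ref{chshkc2um}); and your existence scheme (saturating sequence, monotonicity, smallness in a limit) is the paper's, up to the inessential difference that the paper sends $\delta\to 0^{+}$ rather than $\alpha_1\to 0^{+}$ — equivalent here, since $\mathcal{J}^1>2$ forces $\alpha_1>\tan\frac{\delta}{2}$, so the two parameters shrink together.

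The genuine gap sits exactly where you place the crux. First, you never control the remainder in $1-\kappa=c(v)\alpha^{2}+O(\alpha^{3})$: to turn the truncated expansion into a valid \emph{lower} bound on $\prod_j \frac{1+\kappa_j}{2}$ — which is what Eq.~(\ref{lb2k}) asserts — one must check that the discarded cubic and quartic terms are favourable, and this holds only on the capped range $\alpha_j\leq \frac{1-4v+4v^2}{1-3v+3v^2}$, i.e., where $\alpha_j^3\qty(\frac{1}{8v(1-v)}-\frac{1}{2})-\alpha_j^4\qty(\frac{1}{8v(1-v)}-\frac{3}{8})\geq 0$. Second, and consequently, your proposed mechanism for the window $0.058<v\leq 0.942$ cannot produce it: for any fixed $v\in(0,1)$ the quadratic saturating recursion closes for every $k$ once $\delta$ is small enough (each $s_l$ is, at lowest order, linear in $\delta$), so ``failure of non-termination uniformly in $k$'' selects no threshold at all — taken at face value it would suggest the result for all $v\in(0,1)$, while the paper's closing remark shows the argument genuinely degenerates at $v=0,1$. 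The actual origin of $0.0580$ is the interplay of the two constraints your sketch omits: the monotonicity step needs every factor $1-\frac{s_j^2}{16v(1-v)}$ to stay positive over the whole admissible range $s_j\leq\frac{1-4v+4v^2}{1-3v+3v^2}$, and the inequality $\qty(\frac{1-4v+4v^2}{1-3v+3v^2})^2<16v(1-v)$ holds precisely for $v>0.0580$, with $0.942$ its mirror under $v\leftrightarrow 1-v$. Without the remainder control and the induced cap on the $\alpha_j$, neither the bound of Eq.~(\ref{chshkx>0}) nor the stated range of $v$ follows from your argument.
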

\begin{proof}
    We have given the details of the proof in Appx.~\ref{ap4}. Here, we give the main arguments leading to the proof. In the case of $v\neq 0$ and $\theta=\frac{\pi}{4}$, we have shown that the Bell value $\mathcal{J}^k$ is lower bounded as follows,
\begin{equation}
\label{chshkx>0}
    \mathcal{J}^k \geq 2 \ \qty[\cos\delta\ \prod\limits_{j=1}^{k-1} \qty(1-\frac{\alpha_j^2 }{16v(1-v)}) +  \frac{\alpha_k \ \sin\delta}{2^{k-1}}]
\end{equation}
If the lower bound of $\mathcal{J}^k$ in Eq.~(\ref{chshkx>0}) is greater than two then 
\begin{equation}
\label{alphakx>0}
\alpha_{k} >\frac{2^{k-1}}{\sin\delta} \ \qty[1-\cos\delta \ \prod\limits_{j=1}^{k-1} \qty(1-\frac{\alpha_j^2 }{16v(1-v)})]
\end{equation}
Therefore, for arbitrary $k$, the sufficient condition to obtain CHSH violations of $k$ number of Bobs can now be translated into the question of existence of $\{\alpha_{l}:2\leq l\leq k\}$ such that
\small
\begin{equation}
\label{sufficientchshkx>0}
0 < \frac{2^{l-1}}{\sin\delta} \ \qty[1-\cos\delta \ \prod\limits_{j=1}^{l-1} \qty(1-\frac{\alpha_j^2 }{16v(1-v)})]<\alpha_{l}\leq \frac{1-4 v+4v^2}{1-3v+3v^2}
\end{equation}
\normalsize
and $\alpha_1>\tan\frac{\delta}{2}$ with $0 < \alpha_l \leq 1$. We then proceed similarly as Theorem $1$ to prove our claim.

\end{proof}

%%%%%%%%%%%%%%%%%%%%%%%%%%%%%%%%%%%%%%%%%%%%%%%%%%%%%%%%%%%%%%%%%%%%%%%%%%%%%%%%%%%%%%%%%%%%%%%%%%%%%%%%%%%%%%%%%%%%%%%%%%%%%%%%%%%%%%%%%%%%%%%%%%%%%%%%%%%%%%%%%%%%%%%%%%%%%%%%%%%%%%%%%%%%%%%%%%%%%%%%%%%%%%%%%%%%%%%%%%%%%%%%%%%%%%%%%%%%%%%%%%%%%%%%%%%%%%%%%%%%%%%%%%%%%%%%%%%%%%%%%%%%%%%%%%%%%%%%%%%%%%%%%%%%%%%%%%%%%%%%%%%%%%%%%%%%%%%%%%%%%%%%%%%%%%%%%%%%%%%%%%%%%%%%%%%%%%%%%%%%%%%%%%%%%%%%%%%%%%%%%%%%%%%%%%%%%%%%%%%%%%%%%%%%%%%%%%%%%%%%%%%%%%%%%%%%%%%%%%%%%%%%%%%%%%%%%%%%%%%%%%%%%%%%%%%%%%%%%%%%%%%%%%%%%%%%%%%%%%%%%%%%%%%%%%%%%%%%%%%%%%%%%%%%%%%%%%%%%%%%%%%%%%%%%%%%%%%%%%%%%%%%%%%%%%%%%%%%%%%%%%%%%%%%%%%%%%%%%%%%%%%%%%%%%%%%%%%%%%%%%%%%%%%%%%%%%%%%%%%%%%%%%%%%%%%%%%%%%%%%%%%%%%%%%%%%%%%%%%%%%%

\section{Summary and Outlook}

 In this work, by introducing an elegant PPM scheme, we have shown that it is possible to share nonlocality through the violation of CHSH inequality between Alice and an arbitrary number of independent sequential Bobs. Our approach thus circumvents the limitations encountered in the earlier work \cite{Steffinlongo2022}, where up to two sequential Bobs can share the nonlocality using local randomness assisted projective measurements. 

Our work distinguishes itself by revealing the underlying recipe for the possibility of unbounded sharing of nonlocality - the trade-off relationship between the degrees of incompatibility pertaining to the observables of both parties (as illustrated in Fig.~\ref{figdinc}). In order to demonstrate unbounded sharing of nonlocality, the shared state needs to be perturbed minimally. This necessitates selecting a measurement scheme that not only ensures the CHSH violation but also minimises the degree of incompatibility for both parties.

To further showcase the significance of our work, we bring out the way this particular feature of incompatibility plays a crucial role in the demonstration \cite{Brown2020} of unbounded sharing of nonlocality using unsharp measurement which overcomes the limitations encountered in earlier works of sharing \cite{Silva2015, Sasmal2018, Shenoy2019}. Subsequently, we have extended the proof of \cite{Brown2020} by proposing a two-parameter class POVMs for which unbounded sharing is possible. 

The significance of our protocol extends beyond sharing advantages; it enhances the foundational understanding of the intricacies linked with POVM implementation. This, in turn, heralds a new paradigm for information processing using POVMs. The versatility stems from the ability to implement a POVM in various ways, allowing for customization in the estimation of sequential statistics as per specific requirements. For example, our considered POVM $\qty{\mathcal{E}_{+}=\alpha B_{+|1} + v (1-\alpha) \mathbb{I}, \mathcal{E}_{-}=\alpha B_{-|1} + (1-v) (1-\alpha) \mathbb{I}}$ can be implemented using $(a)$ four Kraus operators (see Eq.~(\ref{fkpk}) of Appx.~\ref{ap2}) and $(b)$ two Kraus operators (see Eq.~(\ref{tpuk}) of Appx.~\ref{ap2}). While local randomness-assisted PPM can simulate case $(a)$, it is unable to simulate case $(b)$, thereby, unveiling the difference between the PPM scheme and the standard unsharp measurement formalism. This level of control over the implementation of the POVM elements paves the way for numerous applications. For instance, sharing scenario involving quantum communication advantage \cite{Miklin2020, Abhyoudai2023, Roy2023}, steering \cite{Sasmal2018}, contextuality \cite{Pan2019, Anwer2021}, sequential randomness certification \cite{Curchod2017, padovan2023}, which were explored under unsharp measurement formalism, can be revisited using the PPM scheme.

In essence, by unraveling the intricacies involved in the interplay between the incompatibility of Alice's and Bob's observables as well as by proposing the PPM scheme, we unlock a new paradigm for device-independent processing tasks using projective measurement with local randomness.

%%%%%%%%%%%%%%%%%%%%%%%%%%%%%%%%%%%%%%%%%%%%%%%%%%%%%%%%%%%%%%%%%%%%%%%%%%%%%%%%%%%%%%%%%%%%%%%%%%%%%%%%%%%%%%%%%%%%%%%%%%%%%%%%%%%%%%%%%%%%%%%%%%%%%%%%%%%%%%%%%%%%%%%%%%%%%%%%%%%%%%%%%%%%%%%%%%%%%%%%%%%%%%%%%%%%%%%%%%%%%%%%%%%%%%%%%%%%%%%%%%%%%%%%%%%%%%%%%%%%%%%%%%%%%%%%%%%%%%%%%%%%%%%%%%%%%%%%%%%%%%%%%%%%%%%%%%%%%%%%%%%%%%%%%%%%%%%%%%%%%%%%%%%%%%%%%%%%%%%%%%%%%%%%%%%%%%%%%%%%%%%%%%%%%%%%%%%%%%%%%%%%%%%%%%%%%%%%%%%%%%%%%%%%%%%%%%%%%%%%%%%%%%%%%%%%%%%%%%%%%%%%%%%%

\section{Acknowledgements} S.S. acknowledges the support from the project DST/ICPS/QuST/Theme 1/2019/4. SK acknowledges the hospitality of Indian Institute of Technology Hyderabad and the support from the FoQaCia project.  AKP acknowledges the support from the research grant MTR/2021/000908.

%%%%%%%%%%%%%%%%%%%%%%%%%%%%%%%%%%%%%%%%%%%%%%%%%%%%%%%%%%%%%%%%%%%%%%%%%%%%%%%%%%%%%%%%%%%%%%%%%%%%%%%%%%%%%%%%%%%%%%%%%%%%%%%%%%%%%%%%%%%%%%%%%%%%%%%%%%%%%%%%%%%%%%%%%%%%%%%%%%%%%%%%%%%%%%%%%%%%%%%%%%%%%%%%%%%%%%%%%%%%%%%%%%%%%%%%%%%%%%%%%%%%%%%%%%%%%%%%%%%%%%%%%%%%%%%%%%%%%%%%%%%%%%%%%%%%%%%%%%%%%%%%%%%%%%%%%%%%%%%%%%%%%%%%%%%%%%%%%%%%%%%%%%%%%%%%%%%%%%%%%%%%%%%%%%%%%%%%%%%%%%%%%%%%%%%%%%%%%%%%%%%%%%%%%%%%%%%%%%%%%%%%%%%%%%%%%%%%%%%%%%%%%%%%%%%%%%%%%%%%%%%%%%%%

\begin{widetext}
		\appendix

\section{Role of Measurement Incompatibility and Anti-Commuting measurements in CHSH violation:} \label{ap1}

Alice's measurements are described by projective measurements, given by $A_x\equiv \Big\{A_{\pm|x} \ \big| \ A_{\pm|x}=\frac{1}{2}\qty(\mathbb{I}\pm A_x)\Big\}$ with $x\in\{0,1\}$. Bob performs two smeared versions of projective measurement, given by $ \mathcal{B}_{y} =\eta_y B_y$ with $B_y \equiv \Big\{B_{\pm|y} \ \big| \ B_{\pm|y}=\frac{1}{2}\qty(\mathbb{I}\pm B_y)\Big\}$ and $y \in \{0,1\}$. In this scenario, the CHSH functional \cite{Clauser1969} is given as follows:
\begin{equation} \label{chshunsh}
    \mathscr{I}=\eta_0 \ (A_0+A_1)\otimes B_0 + \eta_1 \ (A_0-A_1) \otimes B_1
\end{equation}
The quantum value of the above CHSH functional is given by $\mathcal{I}=\Tr[\rho \ \mathscr{I}]$, where $\rho$ is the shared state between Alice and Bob. Now, in order to evaluate $\mathcal{I}$, we invoke the SOS method introduced in \cite{Pan2020}.

Let us define an operator $\gamma$ as follows:
\begin{equation} \label{sos1}
    \gamma = \sum\limits_{i=1}^{2}\frac{\omega_i}{2}(L_i)^{\dagger}(L_i)
\end{equation}
Now, without loss of generality, we can always choose $L_i$ and $\omega_i$ as follows
\begin{eqnarray} \label{sos2}
   && L_1 =\frac{\eta_0}{\omega_1}(A_0+A_1)\otimes \mathbb{I}-\mathbb{I}\otimes B_{0} \ \ ; \ \      L_2=\frac{\eta_1}{\omega_2}(A_0-A_1)\otimes \mathbb{I}-\mathbb{I}\otimes B_{1} \nonumber \\
 && \omega_1 = \eta_0 \ ||(A_0+ A_1)\rho^{\frac{1}{2}}||_F \ \ ; \ \ \omega_2 = \eta_1 \ ||(A_0- A_1)\rho^{\frac{1}{2}}||_F 
\end{eqnarray}
where $\norm{ \ \vdot \ }_F$ is the Frobenious norm and given by $||\mathcal{O}\rho^{\frac{1}{2}}||_F=\sqrt{\Tr[\mathcal{O}^{\dagger}\mathcal{O} \ \rho ]}$.

Note that by construction $L_i$'s are Hermitian operators, therefore, $\gamma$ is a positive semi-definite operator, i.e., $\Tr[\gamma \rho]\geq 0$. Now, if we put $L_i$ and $\omega_{i}$ into Eq.~(\ref{sos1}), we obtain the following
\begin{equation} \label{sos3}
    \Tr[\rho \mathscr{I}] = (\omega_1+\omega_2) - \Tr[\gamma \rho]
\end{equation}
Therefore, the quantum value of the CHSH functional is evaluated as follows: 
\begin{eqnarray}
    \mathcal{I} &=& \omega_1+\omega_2 - \Tr[\gamma \rho] \nonumber \\
    &=& \eta_0 \sqrt{2+\langle \{A_0,A_1\}\rangle_{\rho}}+\eta_1 \sqrt{2-\langle \{A_0,A_1\}\rangle_{\rho}} -\Tr[\gamma \rho] \nonumber \\
    &\leq& \eta_0 \sqrt{2+\langle \{A_0,A_1\}\rangle_{\rho}}+\eta_1 \sqrt{2-\langle \{A_0,A_1\}\rangle_{\rho}} \ \ \ \ [\text{since} \ \Tr[\gamma \rho]\geq 0]
\end{eqnarray}
The maximum quantum value of the CHSH functional will be then $ \mathcal{I}^{opt}=\eta_0 \sqrt{2+\langle \{A_0,A_1\}\rangle_{\rho}}+\eta_1 \sqrt{2-\langle \{A_0,A_1\}\rangle_{\rho}}$ when $Tr[\gamma \rho]=0$. Note that for $\eta_0=\eta_1=1$, the maximum quantum value is $2\sqrt{2}$ if Alice's and Bob's measurements are mutually anti-commuting, i.e., $\{A_0,A_1\}=\{B_0,B_1\}=0$.

%%%%%%%%%%%%%%%%%%%%%%%%%%%%%%%%%%%%%%%%%%%%%%%%%%%%%%%%%%%%%%%%%%%%%%%%%%%%%%%%%%%%%%%%%%%%%%%%%%%%%%%%%%%%%%%%%%%%%%%%%%%%%%%%%%%%%%%%%%%%%%%%%%%%%%%%%%%%%%%%%%%%%%%%%%%%%%%%%%%%%%%%%%%%%%%%%%%%%%%%%%%%%%%%%%%%%%%%%%%%%%%%%%%%%%%%%%%%%%%%%%%%%%%%%%%%%%%%%%%%%%%%%%%%%%%%%%%%%%%%%%%%%%%%%%%%%%%%%%%%%%%%%%%%%%%%%%%%%%%%%%%%%%%%%%%%%%%%%%%%%%%%%%%%%%%%%%%%%%%%%%%%%%%%%%%%%%%%%%%%%%%%%%%%%%%%%%%%%%%%%%%%%%%%%%%%%%%%%%%%%%%%%%%%%%%%%%%%%%%%%%%%%%%%%%%%%%%%%%%%%%%%%%%%

\section{Sequential Sharing Scenario: Single Alice - Multiple independent Bobs} \label{ap2}

Let the reduced state for $k^{th}$ Bob is $\rho_k$. In sequential measurement scenario, each Bob (say $(k-1)^{th}$ Bob) performs two different measurements on its own reduced state $\rho_{k-1}$. Therefore, the post-measurement states are given by
\begin{eqnarray}
   (\rho_{k})_{B^{k-1}_j} &\rightarrow{}& \sum_{b} \qty(\mathbb{I} \otimes \sqrt{B^{(k-1)}_{b|j}} ) \rho_{k-1} \qty(\mathbb{I} \otimes \sqrt{B^{(k-1)}_{b|j}}) \ \ \forall j\in\{0,1\}
\end{eqnarray}
Since all Bobs measure independently, the post measurement state $\rho_{k}$ after $(k-1)^{th}$ Bob's measurement is given by
\begin{equation} \label{indbob}
\rho_{k}= \frac{1}{2} \qty[ (\rho_{k})_{B^{k-1}_0} +(\rho_{k})_{B^{k-1}_1}]
\end{equation}

In the sequential scenario, Alice performs projective measurements and each Bob's first measurement $(B_0)$ is projective, while the second $(\mathcal{B}_1)$ is a two-parameter $(\alpha_k,v)$ POVM, formulated as follows:
\begin{eqnarray}
\mathcal{B}_1^k&=&\alpha_k \ B_1 + (2v-1)(1-\alpha_k) \ \mathbb{I} \\
      &\equiv& \Bigg\{ \ \ \mathcal{E}_{+|1}^k=\frac{1}{2} \ \bigg[\alpha_k \ B_1 + \big\{2v-\alpha_k \ (2v-1)\big\} \ \mathbb{I}\bigg],  \  \ \ \ \mathcal{E}_{-|1}^k = \frac{1}{2} \ \bigg[-\alpha_k \ B_1 +  \big\{1-(1-\alpha_k) \ (2v-1) \big\} \ \mathbb{I}\bigg] \ \ \Bigg\} \label{tpu}
\end{eqnarray}
with $0\leq v \leq 1$, $0\leq \alpha_k \leq 1$ and $B_{\pm|1}$ are the respective projectors corresponding to outcomes $\pm1$. $\mathcal{E}_{\pm}$ are two POVM-elements, called effect operator. It is crucial to note here that such POVM can be implemented through Kraus-operator formalism \cite{Busch1996book}. Now, the measurement of the observable $\mathcal{B}_1$ can be implemented in many different ways. Here, we will confine our studies for particular two cases:
\begin{enumerate} [(i)]
    \item POVM implemented by four Kraus operators:
    \begin{equation}\label{fkpk} 
    \mathcal{K}^{k}_{1}=\sqrt{\alpha_k} \ B_{+|1} \ ; \ \mathcal{K}^{k}_{2}=\sqrt{\alpha_k} \ B_{-|1} \ ; \ \mathcal{K}^{k}_{3}=\sqrt{v(1-\alpha_k)} \ \mathbb{I} \ ; \ \mathcal{K}^{k}_{4}=\sqrt{(1-v)(1-\alpha_k)} \ \mathbb{I} 
\end{equation}
Note that $v=1$ corresponds to the \textit{probabilistic projective measurement}. Here, $\mathcal{E}^k_{+}=(\mathcal{K}_1)^{\dagger}(\mathcal{K}_1)+(\mathcal{K}_3)^{\dagger}(\mathcal{K}_3)$ and $\mathcal{E}^k_{-}=(\mathcal{K}_2)^{\dagger}(\mathcal{K}_2)+(\mathcal{K}_4)^{\dagger}(\mathcal{K}_4)$.
\item POVM implemented by two Kraus operators:
\begin{equation}\label{tpuk}
    \mathcal{K}^{k}_{1}=\frac{m_1+m_2}{2} \ \mathbb{I}_2 +  \frac{m_1-m_2}{2} \ B_1 \ ; \ \ \mathcal{K}^{k}_{2}=\frac{n_1+n_2}{2} \ \mathbb{I}_2 +  \frac{n_1-n_2}{2} \ B_1 \ ;  
\end{equation}
where $m_1=\sqrt{x(1-\alpha_{k})+\alpha_{k}}$, $m_2=\sqrt{x(1-\alpha_{k})}$, $n_1=\sqrt{(1-\alpha_{k})(1-x)}$ and $n_2=\sqrt{1-x(1-\alpha_{k})}$. Note that $\alpha_k=\frac{1}{2}$ corresponds to the standard unsharp measurement \cite{Busch1986}.
\end{enumerate}

%%%%%%%%%%%%%%%%%%%%%%%%%%%%%%%%%%%%%%%%%%%%%%%%%%%%%%%%%%%%%%%%%%%%%%%%%%%%%%%%%%%%%%%%%%%%%%%%%%%%%%%%%%%%%%%%%%%%%%%%%%%%%%%%%%%%%%%%%%%%%%%%%%%%%%%%%%%%%%%%%%%%%%%%%%%%%%%%%%%%%%%%%%%%%%%%%%%%%%%%%%%%%%%%%%%%%%%%%%%%%%%%%%%%%%%%%%%%%%%%%%%%%%%%%%%%%%%%%%%%%%%%%%%%%%%%%%%%%%%%%%%%%%%%%%%%%%%%%%%%%%%%%%%%%%%%%%%%%%%%%%%%%%%%%%%%%%%%%%%%%%%%%%%%%%%%%%%%%%%%%%%%%%%%%%%%%%%%%%%%%%%%%%%%%%%%%%%%%%%%%%%%%%%%%%%%%%%%%%%%%%%%%%%%%%%%%%%%%%%%%%%%%%%%%%%%%%%%%%%%%%%%%%%%%%%%%%%%%%%%%%%%%

\section{Sharing of nonlocality using probabilistic projective measurement (POVM implemented by four Kraus operators)} \label{ap3}

The Kraus operators in this scenario is given by Eq.~(\ref{fkpk}). The average reduced state given by Eq.~(\ref{indbob}) is then expressed in terms of the Kraus operator as follows
\begin{eqnarray}\label{stategen2}
\rho_{k} &=&  \frac{1}{2} \qty[\sum_{b\in[+,-]} \qty(\mathbb{I} \otimes B^{k-1}_{b|0}) \ \rho_{k-1} \ \qty(\mathbb{I} \otimes B^{k-1}_{b|0})+\sum_{b\in\{1,2,3,4\}} \qty(\mathbb{I} \otimes \mathcal{K}^{k-1}_{b}) \ \rho_{k-1} \ \qty(\mathbb{I} \otimes \mathcal{K}^{k-1}_{b})] \nonumber \\
 &=& \frac{1}{2} \Bigg[ \Bigg\{\left( \mathbb{I} \otimes \frac{\mathbb{I} + B_{0}}{2} \right) \  \rho_{k-1} \  \left( \mathbb{I} \otimes \frac{\mathbb{I} + B_{0}}{2} \right) + \left( \mathbb{I} \otimes \frac{\mathbb{I} - B_{0}}{2} \right) \  \rho_{k-1} \  \left( \mathbb{I} \otimes \frac{\mathbb{I} - B_{0}}{2} \right) \Bigg\} \nonumber \\
 &&+\alpha_{k-1}\Bigg\{\qty( \mathbb{I} \otimes B_{0|1}) \  \rho_{k-1} \qty( \mathbb{I} \otimes B_{0|1}) + \qty( \mathbb{I} \otimes B_{1|1}) \  \rho_{k-1} \qty( \mathbb{I} \otimes B_{1|1}) \Bigg\}+ (1-\alpha_{k-1}) \ \rho_{k-1}\Bigg]  \nonumber \\
&=&\frac{1}{4} \Bigg[ (3-\alpha_{k-1}) \ \rho_{k-1} + \qty( \mathbb{I} \otimes B_{0}) \  \rho_{k-1} \qty( \mathbb{I} \otimes B_{0})   + \alpha_{k-1} \ \qty( \mathbb{I} \otimes B_{1}) \  \rho_{k-1} \qty( \mathbb{I} \otimes B_{1}) \Bigg]
\end{eqnarray}

The CHSH value between Alice-Bob$^k$ is given by\footnote{The CHSH functional is given by 
\begin{equation} \label{chshsf1}
    \mathscr{I}^{k} = \Big[A_0 + A_1\Big] \otimes B^{k}_0 + \Big[A_0 - A_1\Big] \otimes \mathcal{B}^{k}_1
    = \Big[A_0 + A_1\Big] \otimes B_0 + \Big[A_0 - A_1\Big] \otimes \Big[\alpha_k \ B_1 +(2v-1)(1-\alpha_k)\mathbb{I}\Big]
\end{equation}}
\begin{eqnarray} \label{chshs11}
    \mathcal{I}^{k} = \Tr[\mathscr{I}^{k} \ \rho_{k}] &=& \Tr[\bigg\{\qty(A_0+A_1) \otimes B_0\bigg\}  \ \rho_{k}] + \alpha_k  \ \Tr[\bigg\{(A_0-A_1) \otimes B_1\bigg\} \ \rho_{k}]  + (2v-1)(1-\alpha_k) \ \Tr[\bigg\{(A_0-A_1) \otimes \mathbb{I}_2\bigg\}  \ \rho_{k}]
\end{eqnarray}

Next, in order to evaluate $\mathcal{I}^k$ given by Eq.~(\ref{chshs11}), we evaluate the terms $\Tr[\qty{\qty(A_0+A_1) \otimes B_0}  \ \rho_{k}]$, $\Tr[\qty{(A_0-A_1) \otimes B_1} \ \rho_{k}]$ and $\Tr[\qty{(A_0-A_1) \otimes \mathbb{I}_2}  \ \rho_{k}]$ as follows:

\subsubsection*{Evaluation of \ $\Tr[\qty{\qty(A_0+A_1) \otimes B_0}  \ \rho_{k}]$}
\begin{eqnarray} \label{t11pm}
=&& \frac{3-\alpha_{k-1}}{4} \Tr[\qty{\qty(A_0+A_1) \otimes B_0} \ \rho_{k-1}] + \frac{1}{4} \Tr[\qty{\qty(A_0+A_1) \otimes B_0} \ \Big\{\qty(\mathbb{I} \otimes B_{0}) \ \rho_{k-1} \ \qty(\mathbb{I} \otimes B_{0})\Big\}] \nonumber \\
&&+\frac{\alpha_{k-1}}{4} \Tr[\qty{\qty(A_0+A_1) \otimes B_0}  \ \Big\{\qty(\mathbb{I} \otimes B_{1}) \ \rho_{k-1} \ \qty(\mathbb{I} \otimes B_{1})\Big\}] \nonumber \\
=&& \frac{2-\alpha_{k-1}}{2} \Tr[\qty{\qty(A_0+A_1) \otimes B_0} \ \rho_{k-1}]  \ , \ \ \ \text{Taking} \ \ \{B_0,B_1\}=0. \nonumber \\
=&&\Tr[\qty{\qty(A_0+A_1) \otimes B_0}  \ \rho_{1}] \ \prod\limits_{j=1}^{k-1} \qty(1-\frac{\alpha_j}{2})
\end{eqnarray}

\subsubsection*{Evaluation of $\Tr[\qty{(A_0-A_1) \otimes B_1} \ \rho_{k}]$}
\begin{eqnarray} \label{t121pm}
=&& \frac{3-\alpha_{k-1}}{4} \Tr[\qty{(A_0-A_1) \otimes B_1}  \ \rho_{k-1}] + \frac{1}{4} \Tr[\qty{(A_0-A_1) \otimes B_1} \ \Big\{\qty(\mathbb{I} \otimes B_{0}) \ \rho_{k-1} \ \qty(\mathbb{I} \otimes B_{0})\Big\}] \nonumber \\
&&+\frac{\alpha_{k-1}}{4} \Tr[\qty{(A_0-A_1) \otimes B_1} \ \Big\{\qty(\mathbb{I} \otimes B_{1}) \ \rho_{k-1} \ \qty(\mathbb{I} \otimes B_{1})\Big\}] \nonumber \\
    =&& \frac{1}{2^{k-1}} \ \Tr[\qty{(A_0-A_1) \otimes B_1} \ \rho_{1}]
    \end{eqnarray}

\subsubsection*{Evaluation of $\Tr[\qty{(A_0-A_1) \otimes \mathbb{I}_2}  \ \rho_{k}]$}
\begin{eqnarray} \label{t111pm}
=&& \frac{3-\alpha_{k-1}}{4} \Tr[\qty{(A_0-A_1) \otimes \mathbb{I}_2}   \ \rho_{k-1}] + \frac{1}{4} \Tr[\qty{(A_0-A_1) \otimes \mathbb{I}_2}   \ \Big\{\qty(\mathbb{I} \otimes B_{0}) \ \rho_{k-1} \ \qty(\mathbb{I} \otimes B_{0})\Big\}] \nonumber \\
&&+\frac{\alpha_{k-1}}{4} \Tr[\qty{(A_0-A_1) \otimes \mathbb{I}_2}   \ \Big\{\qty(\mathbb{I} \otimes B_{1}) \ \rho_{k-1} \ \qty(\mathbb{I} \otimes B_{1})\Big\}] \nonumber \\
=&& \Tr[\qty{(A_0-A_1) \otimes \mathbb{I}_2}  \ \rho_{k-1}] \nonumber \\
=&& \Tr[\qty{(A_0-A_1) \otimes \mathbb{I}_2}  \ \rho_{1}]
\end{eqnarray}

Then by invoking the above Eqs.~(\ref{t11pm}-\ref{t111pm}) into Eq.~\eqref{chshs11}, we obtain the following
\begin{equation} \label{chshki1}
\mathcal{I}^{k} = \Bigg\{\prod\limits_{j=1}^{k-1} \qty(1-\frac{\alpha_j}{2}) \Bigg\}  \ \Tr[\qty(A_0+A_1) \otimes B_0^k \ \ \rho_{1}]+ \frac{\alpha_k}{2^{k-1}} \ \Tr[\qty(A_0-A_1)\otimes B_1 \ \ \rho_{1}] + (1-\alpha_k)(2v-1) \ \Tr[\qty(A_0-A_1)\otimes \mathbb{I}_2 \ \ \rho_{1}]
\end{equation}

Now, let us consider the following observables and initial shared pure two-qubit entangled state
\begin{eqnarray}
   A_0&=&\sin \delta \ \sigma_z +\cos \delta \ \sigma_x \ ; \ \ A_1 = -\sin\delta \ \sigma_z +\cos\delta \ \sigma_x \ ; \nonumber \\
    B^k_0 &=& \sigma_x \ ; \ \ \mathcal{B}_1^k = (2v-1)(1-\alpha_k) \mathbb{I}+\alpha_k \ \sigma_z \ ; \label{obspm} \nonumber \\
    \ket{\psi}_{1}&=&\cos\theta  \ \ket{00}+\sin\theta \ \ket{11} \ ; \ \ 0\leq \theta \leq \frac{\pi}{4} \ ; \ \ 0\leq \delta\leq \frac{\pi}{2} \label{sostpm}
    \end{eqnarray}

Using Eq.~(\ref{sostpm}) we obtain:
\begin{equation} 
\mathcal{I}^k\qty(v,\delta,\theta)= 2 \ \Bigg[\cos \delta  \sin2\theta \prod\limits_{j=1}^{k-1} \qty(1-\frac{\alpha_j}{2}) + (2v-1)\sin\delta\cos2\theta + \frac{\alpha_k}{2^{k-1}} \ \sin\delta \qty{1-2^{k-1}(2v-1)\cos2\theta} \Bigg] 
\end{equation}

Before proving the Theorem $1$ given in the main text, let us look into the behaviours of $\mathcal{I}^{1}$ and $\mathcal{I}^{2}$ to investigate the allowed ranges of the variables.

The CHSH value between Alice-Bob$^1$ is
\begin{equation}
  \mathcal{I}^1 = 2 \ \Bigg[ \sin(2\theta-\delta) + 2v \sin\delta \cos 2\theta+ \alpha_1 \ \sin\delta\bigg\{1-(2v-1)\cos 2\theta\bigg\}\Bigg]  
\end{equation}
The CHSH violation for Alice-Bob$^1$, i.e., $ \mathcal{I}^1 > 2$ implies
\begin{equation}
   \alpha_1 > \frac{1-\sin(2\theta-\delta)-2v \sin\delta \cos 2\theta}{\sin\delta\bigg\{1-(2v-1)\cos 2\theta\bigg\}}
\end{equation}
Now, in order to minimise the lower bound of $\alpha_1$, we choose $\theta=\frac{\pi}{4}-\frac{\delta}{2}$ and $v=1$. In this case, the lower bound of $\alpha_1$ becomes, $\alpha_1>0$. In this case, Alice-Bob$^1$ will obtain CHSH violation for all $\delta \in [0,\frac{\pi}{2}]$. This implies we can take $\lim \alpha_1 \to 0^{+}$ and still obtain $\mathcal{I}^1>2$. Therefore we choose $\theta=\frac{\pi}{4}-\frac{\delta}{2}$ and $v=1$ as our preferred measurement settings. In this case the CHSH value between Alice-Bob$^k$ becomes:
\begin{equation} \label{chshkpmc31}
   \mathcal{I}^k \qty(v=1,\delta, \theta = \frac{\pi}{4}-\frac{\delta}{2}) =2 \ \qty[\cos^2\delta \ \prod\limits_{j=1}^{k-1} \qty(1-\frac{\alpha_j}{2}) +\sin^2\delta + \frac{\alpha_k}{2^{k-1}} \sin\delta \ \qty(1-2^{k-1} \sin\delta)]
\end{equation}
Now, the CHSH violation between Alice-Bob$^2$, i.e., $\mathcal{I}^2>2$ implies
\begin{eqnarray}
\alpha_2 &>& \frac{\alpha_1 \cos^2\delta}{ \sin\delta \ \qty(1-2 \sin\delta)} \ ; \ \text{the lower bound is between $0$ and $1$ for} \ 0<\delta<\frac{\pi}{6} \label{3povmchsh2pm}
\end{eqnarray}
Recall that, we need the lower bound in Eq~(\ref{3povmchsh2pm}) to be between $0$ and $1$ for plausibility of $\alpha_2$ yielding $\mathcal{I}^2>2$. In this case, the condition $0<\delta<\frac{\pi}{6}$ (in addition to $\alpha_1>0$) takes the lower bound in Eq~(\ref{3povmchsh2pm}) to be between $0$ and $1$. This implies only a certain class of pure entangled states will give rise to $\mathcal{I}_2>2$. Furthermore, $\min\limits_\delta\qty[ \frac{\cos^2\delta}{ \sin\delta \ \qty(1-2 \sin \delta)}]=2(\sqrt{3}+2)$ for $\delta=2 \tan^{-1}\qty(2+\sqrt{3}-\sqrt{6+4\sqrt{3}})\approx 0.2713$. Therefore, $\alpha_2 > 2(\sqrt{3}+2)\alpha_1$, which in turn fixes the upper bound of $0<\alpha_1\leq\frac{2-\sqrt{3}}{2}\approx 0.134$. Note that, it only changes puts an upper bound on $\alpha_1$ and the lower bound is still zero. increasing the number of Bobs puts stricter restrictions on the upper bounds of of $\delta$ and $\alpha_1$. Thus, in the limit of large $k$ both $\delta$ and $\alpha_1$ will be very close to zero. As stated in the main text, using Eq.~(\ref{chshkpmc31}), $\mathcal{I}^k > 2$ for arbitrary $k$ yields Eq.~(12) of the main text. 

Now, we will prove the Theorem $1$ stated in the main text. For the sake of clarity we again state the theorem below:

Theorem 1: For arbitrary $k$ there will always exist a fraction of pure entangled two-qubit states, characterized by the concurrence, $\mathcal{C}=\sin2\theta <2^{1-k}\sqrt{4^{k-1}-1}$, such that there exists a sequence $ \{s_{1},s_{2},\ldots,s_{k}\}$ such that
\begin{eqnarray} \label{kpm}
0<\frac{2^{k-1}\cos^2\delta \ \qty(1- \prod\limits_{j=1}^{k-1} \qty(1-\frac{\alpha_j}{2}))}{ \sin\delta \ \qty(1-2^{k-1} \sin\delta)}<s_l<1 \ \ \forall2\leq l\leq k, \ s_{1}>0
\end{eqnarray}

\begin{proof}

Without loss of generality, we take $\alpha_{1}=s_{1}>0$. For $l\geq 2$, consider the following sequence 
\begin{equation}
\label{94pm}
    s_l =\begin{cases}
        (1+\epsilon)\frac{2^{k-1}\cos^2\delta \ \qty(1- \prod\limits_{j=1}^{k-1} \qty(1-\frac{\alpha_j}{2}))}{ \sin\delta \ \qty(1-2^{k-1} \sin m)} \ \ & \text{where} \ s_{l-1}<1\\
        \infty \ \ & \text{otherwise}
    \end{cases}
\end{equation}
where $\epsilon>0$. Recall that, we need to prove $(a) \ \ s_{k}>s_{k-1}>\ldots>s_{1}$ and $(b)$ $\lim_{\alpha_{1}\to 0^{+}}s_{l}=0^{+}$. To prove $(a)$, first note that, $s_{2}=(1+\epsilon)\frac{s_{1}\cos^2\delta}{ \sin\delta \ \qty(1-2 \sin\delta)}>2(1+\epsilon)(\sqrt{3}+2)s_{1}$ since $\min\limits_\delta\qty[ \frac{\cos^2\delta}{ \sin\delta \ \qty(1-2 \sin\delta)}]=2(\sqrt{3}+2)$. For $l\geq 3$, we have the following

\begin{eqnarray}
    \frac{s_l}{s_{l-1}} &=& 2\times\frac{(1-2^{l-2}\sin\delta)}{(1-2^{l-1}\sin\delta)}\times\frac{\qty(1-\prod_{j=1}^{l-1}\qty(1-\frac{s_{j}}{2}))}{\qty(1-\prod_{j=1}^{l-2}\qty(1-\frac{s_{j}}{2}))} \label{96pm}
\end{eqnarray}

Note that $2^{l-2}\sin\delta < 2^{l-1}\sin\delta$, thus we have $\frac{(1-2^{l-2}\sin\delta)}{(1-2^{l-1}\sin\delta)}>1$. Furthermore, $\prod_{j=1}^{l-1}\qty(1-\frac{s_{j}}{2})<\prod_{j=1}^{l-2}\qty(1-\frac{s_{j}}{2})$ thus we have $\frac{\qty(1-\prod_{j=1}^{l-1}\qty(1-\frac{s_{j}}{2}))}{\qty(1-\prod_{j=1}^{l-2}\qty(1-\frac{s_{j}}{2}))}>1$. Therefore, we have $s_{l}>2s_{l-1}$ for $l\geq 3$. Thus, we have $s_{k}>s_{k-1}>\ldots>s_{1}$, proving $(a)$.

To prove $(b)$, first note that $s_{2}$ is a polynomial of $\alpha_{1}$ with lowest power of $\alpha_{1}$ being one. Thus, we have $\lim_{\alpha_{1}\to 0^{+}}s_{2} = 0^{+}$. Similarly, we have
\begin{eqnarray}
   s_{3}=\frac{4(1+\epsilon)\cos^{2}\delta}{\sin\delta(1-4\sin\delta)}\qty(\frac{s_{2}}{2}+\frac{s_{1}}{2}+\frac{s_{2}s_{1}}{4}), 
\end{eqnarray}
which is also a polynomial of $\alpha_{1}$ with lowest power of $\alpha_{1}$ being one. Therefore, $\lim_{\alpha_{1}\to 0^{+}}s_{3} = 0^{+}$. In particular, if for any $l$, if we assume each term in the sequence $\{s_{1},s_{2},s_{3},\ldots,s_{l-1}\}$ is a polynomial of $\alpha_{1}$ with lowest power of $\alpha_{1}$ being one then $\qty(1-\prod_{j=1}^{l-1}\qty(1-\frac{s_{j}}{2}))$ is also a polynomial of $\alpha_{1}$ with lowest power of $\alpha_{1}$ being one. To see this, let $\mathcal{P}_{n}(\alpha_{1})$ is the set of polynomials of $\alpha_{1}$ with lowest power of $\alpha_{1}$ being $n$. By assumption, for each $1\leq j\leq (l-1)$, we have $s_{j}\in \mathcal{P}_{1}(\alpha_{1})$. Furthermore, the constant coefficient (coefficient corresponding to zero-th power of $\alpha_{j}$)  of the polynomial $\prod_{j=1}^{l-1}\qty(1-\frac{s_{j}}{2})$ is one. Thus, $\qty[1-\prod_{j=1}^{l-1}\qty(1-\frac{s_{j}}{2})]\in \mathcal{P}_{1}(s_{1})$. Thus, $s_{l}$ given by Eq~(\ref{94pm}) is a polynomial of $\alpha_{1}$ where the lowest power of $\alpha_{1}$ is one. Consequently, we have $\lim_{\alpha_{1}\to 0^{+}}s_{l}=0^{+}$.
\end{proof}

%%%%%%%%%%%%%%%%%%%%%%%%%%%%%%%%%%%%%%%%%%%%%%%%%%%%%%%%%%%%%%%%%%%%%%%%%%%%%%%%%%%%%%%%%%%%%%%%%%%%%%%%%%%%%%%%%%%%%%%%%%%%%%%%%%%%%%%%%%%%%%%%%%%%%%%%%%%%%%%%%%%%%%%%%%%%%%%%%%%%%%%%%%%%%%%%%%%%%%%%%%%%%%%%%%%%%%%%%%%%%%%%%%%%%%%%%%
%%%%%%%%%%%%%%%%%%%%%%%%%%%%%%%%%%%%%%%%%%%%%%%%%%%%%%%%%%%%%%%%%%%%%%%%%%%%%%%%%%%%%%%%%%%%%%%%%%%%%%%%%%%%%%%%%%%%%%%%%%%%%%%%%%%%%%%%%%%%%%%%%%%%%%%%%%%%%%%%%%%%%%%%%%%%%%%%%%%%%%%%%%%%%%%%%%%%%%%%%%%%%%%%%%%%%%%%%%%%%%%%%%%%%%%%%%

\section{Sharing of nonlocality using two-parameter POVM implemented by two-Kraus operators~(proof of Theorem $2$):} \label{ap4}

The post measurement reduced state $\rho_{k}$ given by the Eq.~(\ref{indbob}) is evaluated by using the Kraus operator given by Eq.~(\ref{tpuk}) as follows

\begin{eqnarray}
\rho_{k} &=&  \frac{1}{2} \ \qty[ \sum_{b\in\{\pm\}} \qty(\mathbb{I} \otimes B^{k-1}_{b|0}) \ \rho_{k-1} \ \qty(\mathbb{I} \otimes B^{k-1}_{b|0})+ \sum_{b\in\{1,2\}} \qty(\mathbb{I} \otimes \mathcal{K}^{k-1}_{b}) \ \rho_{k-1} \ \qty(\mathbb{I} \otimes \mathcal{K}^{k-1}_{b})] \nonumber \\
   &=& \frac{1}{8}\Bigg[\qty{2+(m_1+m_2)^2+(n_1+n_2)^2} (\mathbb{I}\otimes\mathbb{I}) \ \rho_{k-1} \ (\mathbb{I}\otimes\mathbb{I}) + 2 \ (\mathbb{I}\otimes B_0) \ \rho_{k-1} \ (\mathbb{I}\otimes B_0) \nonumber \\
   &&+ \qty{m_1^2-m_2^2+n_1^2-n_2^2}\bigg\{(\mathbb{I}\otimes\mathbb{I}) \ \rho_{k-1} \ (\mathbb{I}\otimes B_1) +(\mathbb{I}\otimes B_1) \ \rho_{k-1} \ (\mathbb{I}\otimes\mathbb{I}) \bigg\} \nonumber \\
   &&+ \qty{(m_1-m_2)^2+(n_1-n_2)^2} (\mathbb{I}\otimes B_1) \ \rho_{k-1} \ (\mathbb{I}\otimes B_1)\Bigg] \nonumber \\ 
&=&\frac{2+\xi_{k-1}}{4} \ \rho_{k-1} + \frac{1}{4} \qty(\mathbb{I} \otimes B_{0}) \ \rho_{k-1} \ \qty(\mathbb{I} \otimes B_{0})  + \frac{1-\xi_{k-1}}{4} \qty(\mathbb{I} \otimes B_{1}) \ \rho_{k-1} \ \qty(\mathbb{I} \otimes B_{1})  \label{stategen2um}
\end{eqnarray}
The parameter $\xi_j$ is given as follows:
\begin{equation} \label{xijc3}
\xi_{j}  =
    \begin{cases}
      \sqrt{1-\alpha_j} & \ \text{if} \ v=0 \ \text{or} \ v=1 \\
      \sqrt{1-\alpha_j^2} & \ \text{if} \ v=\frac{1}{2} \\
       v\sqrt{1+q_1}+(1-v)\sqrt{1-q_2} & \ \text{if} \  0<v<\frac{1}{2} \ \text{and} \ \frac{1}{2}<v<1 
    \end{cases}       
\end{equation}
with $q_1=\alpha_j \ \qty(\frac{1-2v}{v})-\alpha_j^2 \ \qty(\frac{1-v}{v})$ and $q_2=\alpha_j \ \qty(\frac{1-2v}{1-v})+\alpha_j^2 \ \qty(\frac{v}{1-v})$.

Note that for $v=\frac{1}{2}$, $\xi_j=\sqrt{1-\alpha_j^2}$ implies one-parameter unbiased POVM, known as the unsharp measurement formalism with unsharp parameter $\alpha_j$.

The CHSH value corresponding to Alice and $k^{th}$ Bob is given by 
\begin{equation} \label{chshs1um}
    \mathcal{I}^{k} = \Tr[\bigg\{(A_0+A_1) \otimes B_0\bigg\}  \ \rho_{k}] + \alpha_k \ \Tr[\bigg\{(A_0-A_1) \otimes B_1 \bigg\}  \ \rho_{k}] + (2v-1)(1-\alpha_k) \ \Tr[\bigg\{(A_0-A_1) \otimes \mathbb{I}_2 \bigg\}  \ \rho_{k}] 
\end{equation}

Next, in order to evaluate $\mathcal{I}^k$ given by Eq.~(\ref{chshs1um}), we evaluate the terms $\Tr[\bigg\{(A_0+A_1) \otimes B_0\bigg\}  \ \rho_{k}]$, $\Tr[\bigg\{(A_0-A_1) \otimes B_1 \bigg\}  \ \rho_{k}]$, and $\Tr[\bigg\{(A_0-A_1) \otimes \mathbb{I}_2 \bigg\}  \ \rho_{k}]$.

\subsubsection*{Evaluation of \ $\Tr[\bigg\{(A_0+A_1) \otimes B_0\bigg\}  \ \rho_{k}]$}
\begin{eqnarray} \label{t11um}
=&& \frac{2+\xi_{k-1}}{4} \Tr[\bigg\{(A_0+A_1) \otimes B_0\bigg\}  \ \rho_{k-1}] + \frac{1}{4} \Tr[\bigg\{(A_0+A_1) \otimes B_0\bigg\}  \ \Big\{\qty(\mathbb{I} \otimes B_{0}) \ \rho_{k-1} \ \qty(\mathbb{I} \otimes B_{0})\Big\}] \nonumber \\
&&+\frac{1-\xi_{k-1}}{4} \Tr[\bigg\{(A_0+A_1) \otimes B_0\bigg\}  \ \Big\{\qty(\mathbb{I} \otimes B_{1}) \ \rho_{k-1} \ \qty(\mathbb{I} \otimes B_{1})\Big\}] \nonumber \\
=&& \frac{2+\xi_{k-1}}{4} \Tr[\bigg\{(A_0+A_1) \otimes B_0\bigg\}  \ \rho_{k-1}] + \frac{1}{4} \Tr[\bigg\{(A_0+A_1) \otimes B_0\bigg\}  \ \rho_{k-1}] \nonumber \\
&& + \frac{1-\xi_{k-1}}{4} \Tr[\bigg\{(A_0+A_1) \otimes B_1B_0B_1\bigg\}  \ \rho_{k-1}] \nonumber \\
=&& \frac{3+\xi_{k-1}}{4} \Tr[\bigg\{(A_0+A_1) \otimes B_0\bigg\}  \ \rho_{k-1}] + \frac{1-\xi_{k-1}}{4} \Tr[\Big\{(A_0+A_1) \otimes (B_1\{B_0,B_1\}- B_0)\Big\}  \ \rho_{k-1}] \nonumber \\
=&& \Bigg\{\frac{1}{2^{k-1}} \ \prod\limits_{j=1}^{k-1} (1+\xi_j) \Bigg\}  \ \Tr[\bigg\{(A_0+A_1) \otimes B_0\bigg\} \ \rho_{1}] \ , \ \ \ \text{Taking} \ \ \{B_0,B_1\}=0.
\end{eqnarray}

With a similar straight forward operator algebra, other terms are evaluated as given by 
\begin{eqnarray} \label{t13um}
\Tr[\bigg\{(A_0-A_1) \otimes B_1 \bigg\}  \ \rho_{k}] &=& \frac{1}{2^{k-1}} \ \Tr[\bigg\{(A_0-A_1) \otimes B_1 \bigg\}  \ \rho_{1}] \nonumber \\
\Tr[\bigg\{(A_0-A_1) \otimes \mathbb{I}_2 \bigg\}  \ \rho_{k}]&=&\Tr[\bigg\{(A_0-A_1) \otimes \mathbb{I}_2 \bigg\}  \ \rho_{1}]
\end{eqnarray}

For the observables and initial shared state given by Eq.~(\ref{obspm}), we obtain following:
\begin{equation} \label{chsho1um}
\mathcal{I}^k(v,\delta,\theta)= \frac{2 }{2^{k-1}}\Bigg[\cos \delta  \sin2\theta \prod\limits_{j=1}^{k-1} \qty(1+\xi_j) + 2^{k-1} (2v-1) \sin\delta\cos2\theta + \alpha_k \sin\delta\qty{1-2^{k-1}(2v-1)\cos 2\theta} \Bigg] 
\end{equation}

Now, considering the shared state to be maximally entangled two-qubit state and taking the range $0<v<\frac{1}{2}$, the CHSH value between Alice and Bob$^k$ is given by
\begin{eqnarray}
    \mathcal{J}^k=\mathcal{I}^k\qty(0<v<\frac{1}{2},\delta,\theta=\frac{\pi}{4})= \frac{2}{2^{k-1}}\Bigg[\cos \delta \prod\limits_{j=1}^{k-1} \qty(1+\xi_j) +\alpha_k \ \sin\delta \Bigg] \label{chshkc2um}
\end{eqnarray}
Here $\xi_j = v\sqrt{1+q_1}+(1-v)\sqrt{1-q_2}$ with  $q_1=\alpha_j \ \qty(\frac{1-2v}{v})-\alpha_j^2 \ \qty(\frac{1-v}{v})$ and $q_2=\alpha_j \ \qty(\frac{1-2v}{1-v})+\alpha_j^2 \ \qty(\frac{v}{1-v})$. Note that, $\xi_{j}$ is symmetric about half i.e. $\xi_{j}(v)=\xi_{j}(1-v)$. Thus, without the loss of generality, we can assume that $v$ lies between zero and half. Note that, for $v=\frac{1}{2}$ the scenario described here reduces to the case considered in \cite{Brown2020}. Here we deal with the situation where $0<v<1/2$. We shall specifically demonstrate that, given an arbitrary $k$, for a suitably chosen $x$ between zero and half, one can find a sequence $\{\alpha_{j}:1\leq j\leq k\}$ such that $\mathcal{I}^{j}>2$ for all $1\leq j\leq k$. 

For $0<v<\frac{1}{4} (2-\sqrt{2})$, if we choose $\alpha_j$ such that it satisfies $\Big\{\qty(0 \leq \alpha_j < 1-\frac{1+2\sqrt{1-8v(1-v)}}{2(1-v)}) \ \cup \  \qty(1-\frac{1-2\sqrt{1-8v(1-v)}}{2(1-v)}<\alpha_j\leq \frac{1-2v}{1-v})\Big\}$, and for $\frac{1}{4} (2-\sqrt{2})\leq x < \frac{1}{2}$, if we choose $0\leq \alpha_j \leq \frac{1-2v}{1-v}$, then such choices imply $0\leq q_1\leq 1$. Moreover, $0< x < \frac{1}{2} $ and $0\leq\alpha_j\leq 1$ implies $0\leq q_2 \leq 1$. Therefore, when $0\leq \alpha_{j}\leq \frac{1-2v}{1-v}$ and $\alpha_{j}\neq \qty(1-\frac{1+2\sqrt{1-8v(1-v)}}{2(1-v)})$ we have $\sqrt{1\pm q}=\sum\limits_{n=0}^{\infty}(-1)^{\frac{n}{2}\mp\frac{n}{2}} \ q^n  \ \binom{\frac{1}{2}}{n}$. Taking only up to second order, we write $\sqrt{1+q_1}\approx \qty(1+\frac{q_1}{2}-\frac{q_1^2}{8}+\mathscr{O}(q_1^3))$ and $\sqrt{1-q_2}\approx  \qty(1-\frac{q_2}{2}-\frac{q_2^2}{8}-\mathscr{O}(q_2^3))$. Now, putting such approximated values of $q_1$ and $q_2$ in Eq.~(\ref{xijc3}), we obtain the following:
\begin{eqnarray}\label{xijc31}
    \xi_j&\approx& v \ \qty(1+\frac{q_1}{2}-\frac{q_1^2}{8}+\mathscr{O}(q_1^3)) + (1-v) \ \qty(1-\frac{q_2}{2}-\frac{q_2^2}{8}-\mathscr{O}(q_2^3)) \nonumber \\
    &=& 1-\alpha_j^2 \frac{1}{8v(1-v)}+\alpha_j^3\qty(\frac{1}{8v(1-v)}-\frac{1}{2})-\alpha_j^4\qty(\frac{1}{8v(1-v)}-\frac{3}{8})+\mathscr{O}(\alpha_j^5)
\end{eqnarray}

Note that for $0<v<\frac{1}{2}$, $\xi_{j}$ is always upper bounded by one. Since, the values of $\alpha_j$ can be arbitrarily small (follows from Eq.$(3)$ of the main text), thus omitting $\alpha_{j}^{4}$ and higher order terms, the maximum value of $\xi_{j}$ is $\xi_j^{max}=\qty(1-\frac{\alpha^{2}_{j}}{2})$ for $v=\frac{1}{2}$, which is the first order approximation of $\sqrt{1-\alpha_{j}^{2}}$.

Therefore, when $0\leq \alpha_{j}\leq \frac{1-2v}{1-v}$ and $\alpha_{j}\neq \qty(1-\frac{1+2\sqrt{1-8v(1-v)}}{2(1-v)})$ for all $1\leq j\leq k$, the CHSH value between Alice and Bob$^k$ given by Eq.~(\ref{chshkc2um}) reduces to the following:
\begin{equation}\label{alicebobchshkum}
    \mathcal{J}^k =\frac{2 }{2^{k-1}}\Bigg[\cos\delta \ \prod\limits_{j=1}^{k-1} \Bigg\{2-\frac{\alpha_j^2 }{8v(1-v)}+\alpha_j^3 \ \qty(\frac{1}{8v(1-v)}-\frac{1}{2}) -\alpha_j^4 \ \qty(\frac{1}{8v(1-v)}-\frac{3}{8})+\mathscr{O}(\alpha_j^5)\Bigg\} +  \alpha_k \ \sin\delta \Bigg] 
\end{equation}
Now, if we choose the value of $\alpha_j$ as $0\leq \alpha_{j}\leq \frac{1-4 v+4v^2}{1-3v+3v^2}$, then it is straightforward to show that $\alpha_j^3\qty(\frac{1}{8v(1-v)}-\frac{1}{2})-\alpha_j^4\qty(\frac{1}{8v(1-v)}-\frac{3}{8})\geq0$ for $0< v< \frac{1}{2}$. This upper bound of $\alpha_{j}$ is justified since $\frac{1-2v}{1-v}\geq \frac{1-4v+4v^2}{1-3v+3v^2}>\qty(1-\frac{1+2\sqrt{1-8v(1-v)}}{2(1-v)})$ for $0< v< 1/2$. Thus, the approximation given by (\ref{xijc31}) is valid within the range $0\leq \alpha_{j}\leq \frac{1-4v+4v^2}{1-3v+3v^2}$ and $\alpha_{j}\neq \qty(1-\frac{1+2\sqrt{1-8v(1-v)}}{2(1-v)})$. Within this range, Eq.~(\ref{alicebobchshkum}) reduces to following:
\begin{equation} \label{lb2k}
    \mathcal{J}^k \geq 2 \ \qty[\cos\delta \ \prod\limits_{j=1}^{k-1} \qty(1-\frac{\alpha_j^2 }{16v(1-v)}) +  \frac{\alpha_k \ \sin\delta }{2^{k-1}}]
\end{equation}
with $ \mathcal{J}^1=2(\cos\delta+\alpha_1 \sin\delta)$ and $ \mathcal{J}^1>2$ implies $\alpha_1 > \tan\frac{\delta}{2}$.

Now, if the lower bound of the Bell value given by Eq.~(\ref{lb2k}) is greater than two then the Bell inequality is violated for arbitrary $k$. Which in turn, gives an lower bound on $\alpha_k$. If $\alpha_k$ satisfies the lower bound then Bell inequality is violated. In particular, we have the following
\begin{eqnarray}
\label{case3chshbobk}
  \alpha_{k} &>&\frac{2^{k-1}}{\sin\delta} \ \qty[1-\cos\delta \ \prod\limits_{j=1}^{k-1} \qty(1-\frac{\alpha_j^2 }{16v(1-v)})] \implies \mathcal{J}^k>2.
\end{eqnarray}
where $0\leq \alpha_{j}\leq \frac{1-4v+4v^2}{1-3v+3v^2}$, since we are using the approximation given by (\ref{xijc31}). It is straightforward to see that $0<\frac{\alpha_j^2 }{16v(1-v)}<1$ in the range $\qty(0.0580<v<\frac{1}{2})\cap \qty(0<\alpha_{j}\leq\frac{1-4v+4v^2}{1-3v+3v^2})$. Thus, similar to Theorem $1$ we we will show the following

\begin{thm}
For arbitrary $k$ and $0.0580<v<\frac{1}{2}$, there exists suitable values of $\delta \in(0,\frac{\pi}{2}]$ and a sequence $\{s_1,s_2,\ldots,s_k\}$, for which 
\begin{equation} \label{thm2mr}
    0\leq \frac{2^{l-1}}{\sin\delta} \ \qty[1-\cos\delta \ \prod\limits_{j=1}^{l-1} \qty(1-\frac{s_j^2 }{16v(1-v)})]<s_{l}\leq \frac{1-4v+4v^2}{1-3v+3v^2} \ \ \forall 2\leq l \leq k, \ s_1>\tan\frac{\delta}{2}
\end{equation}

\end{thm}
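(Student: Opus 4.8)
\emph{Sketch of the proposed proof.}\quad The plan is to follow the template of Theorem~\ref{thm1}, now adapted to the lower bound appearing in Eq.~(\ref{thm2mr}). First I would write
\begin{equation}
L_{l}=\frac{2^{\,l-1}}{\sin\delta}\,\qty[\,1-\cos\delta\,P_{l-1}\,],\qquad P_{l-1}=\prod_{j=1}^{l-1}\qty(1-\frac{s_j^{2}}{16\,v(1-v)}),
\end{equation}
abbreviate the fixed upper bound as $U=\tfrac{1-4v+4v^{2}}{1-3v+3v^{2}}$, and exploit the fact (established just above the theorem) that $0.058<v<\tfrac12$ together with $0<s_j\leq U$ forces $0<\tfrac{s_j^{2}}{16v(1-v)}<1$, so every factor of $P_{l-1}$, and hence $P_{l-1}$ itself, lies in $(0,1)$. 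I would then fix $s_1=(1+\epsilon')\tan\tfrac{\delta}{2}$ (so that $s_1>\tan\tfrac{\delta}{2}$, guaranteeing $\mathcal{J}^1>2$), and for $l\geq2$ define recursively $s_l=(1+\epsilon)\,L_l$ whenever $s_{l-1}\leq U$ and $s_l=\infty$ otherwise, with $\epsilon,\epsilon'>0$. Because $\cos\delta\,P_{l-1}\leq1$ yields $L_l\geq0$ and, by construction, $s_l>L_l$, the left inequality of Eq.~(\ref{thm2mr}) holds automatically; what remains is to establish $(a)$ the monotonicity $s_k>s_{k-1}>\dots>s_1$ and $(b)$ the vanishing $s_l\to0^{+}$ as $\delta\to0^{+}$, after which $s_l>L_l$ (together with $\alpha_j>L_j\Rightarrow\mathcal{J}^j>2$, Eq.~(\ref{lb2k})) delivers all $k$ violations.

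For $(a)$, I would form, for $l\geq3$, the ratio
\begin{equation}
\frac{s_l}{s_{l-1}}=\frac{L_l}{L_{l-1}}=2\,\frac{1-\cos\delta\,P_{l-1}}{1-\cos\delta\,P_{l-2}}.
\end{equation}
Since the extra factor $\qty(1-\tfrac{s_{l-1}^{2}}{16v(1-v)})<1$ makes $P_{l-1}<P_{l-2}$, the numerator exceeds the positive denominator, so $s_l>2s_{l-1}$; the base case $l=2$ I would treat directly, expanding $s_2=(1+\epsilon)\tfrac{2}{\sin\delta}\qty[1-\cos\delta+\cos\delta\,\tfrac{s_1^{2}}{16v(1-v)}]$ and checking that the leading ratio $s_2/s_1\to 2\,[1+\tfrac{1}{32v(1-v)}]>1$ for small $\delta$. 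For $(b)$, I would argue by induction that $s_l=O(\delta)$: the seed $s_1=(1+\epsilon')\tan\tfrac{\delta}{2}=O(\delta)$ gives $s_j^{2}=O(\delta^{2})$ for $j<l$, whence $P_{l-1}=1-O(\delta^{2})$; combined with $1-\cos\delta=O(\delta^{2})$ this makes the bracket $1-\cos\delta\,P_{l-1}=O(\delta^{2})$, so that $L_l=\tfrac{2^{\,l-1}}{\sin\delta}\,O(\delta^{2})=O(\delta)$ and $s_l=(1+\epsilon)L_l\to0^{+}$. As $k$ is fixed, only the finitely many requirements $s_1,\dots,s_k\leq U$ need hold, and since each $s_l$ is $O(\delta)$ they are all met once $\delta$ is taken sufficiently small.

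The main obstacle I anticipate is the interplay between the geometrically growing prefactor $2^{\,l-1}$ and the quadratic smallness of the $s_j^{2}$ terms: one must verify that the leading coefficient $c_l$ defined by $s_l\sim c_l\,\delta$ stays finite through the recursion $c_l=(1+\epsilon)\,2^{\,l-1}\qty[\tfrac12+\tfrac{1}{16v(1-v)}\sum_{j<l}c_j^{2}]$, so that the $O(\delta)$ estimate is genuinely uniform over $l\leq k$ rather than degenerating as $l$ increases. Controlling this, together with confirming that the second-order truncation of $\xi_j$ used to pass from Eq.~(\ref{chshkc2um}) to Eq.~(\ref{lb2k}) remains valid throughout the admissible window $0<s_j\leq U$ (which is precisely where the bound $v>0.058$ enters, keeping each $\tfrac{s_j^{2}}{16v(1-v)}$ strictly below one), is the delicate step; the monotonicity and the vanishing limit then combine, exactly as in Theorem~\ref{thm1}, to place the entire sequence in the required interval and complete the proof.
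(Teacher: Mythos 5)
Your proposal matches the paper's proof essentially step for step: the same $(1+\epsilon)$-inflated recursive sequence with the $\infty$ fallback, the same ratio argument yielding $s_l > 2 s_{l-1}$, and the same strategy of showing each $s_l$ vanishes as $\delta \to 0^{+}$ so that the finitely many constraints $s_l \leq \frac{1-4v+4v^2}{1-3v+3v^2}$, $l \leq k$, can all be satisfied by taking $\delta$ small. The only cosmetic difference lies in the limit step: the paper majorizes $s_l$ by an auxiliary sequence $\beta_l$ of genuine polynomials in $\delta$ whose lowest power is one, whereas you run the equivalent $O(\delta)$ induction (tracking the coefficient recursion $c_l$) directly on $s_l$.
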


\begin{proof}
For this purpose, we consider the following form of $\{s_j:0\leq j\leq k\}$ satisfying (\ref{case3chshbobk}) for arbitrary $k$
\begin{equation}\label{genalphak}
s_{l} = \begin{cases}
    0 & \ \text{if} \ l=0\\
   (1+\epsilon)\tan\frac{\delta}{2}  & \ \text{if} \ l=1\\
    \frac{2^{l-1}(1+\epsilon)}{\sin\delta} \ \qty[1-\cos\delta \ \prod\limits_{j=0}^{l-1} \qty(1-\frac{s_j^2 }{16v(1-v)})] & \ \text{if} \ l>1 \ \text{and} \ s_{l-1}\leq 1
\end{cases}
\end{equation}
where $0 < \epsilon\leq 1$. To show there exists suitable values of $\delta$ such that $s_{l}$ given by Eq.~(\ref{genalphak}) is less than or equal to $\frac{1-4v+4v^2}{1-3v+3v^2}$ for all $1\leq l\leq k$, we will demonstrate as $\{s_{l}:1\leq l\leq k\}$ given by Eq.~(\ref{genalphak}) can be made arbitrarily small for all $1\leq l\leq k$. In particular, we then demonstrate that  $(i)$ the sequence given by Eq.~(\ref{genalphak}) is monotonically increasing i.e. $s_{1}<s_{2}<\ldots<s_{k}$ and $(ii)$ $s_{l}$ can be made arbitrarily small as $\delta$ approaches zero i.e. $\lim_{\delta\rightarrow 0^{+}}s_{j}= 0^{+} \ \forall 1\leq l\leq k$. Thus, for arbitrary $k$, we can choose infinitesimal small values of $m$ such that $0<s_{l}\leq \frac{1-4v+4v^2}{1-3v+3v^2}$ where $s_{l}$ is given by (\ref{genalphak}). Combining $(i)$ with $(ii)$ imply that $0<s_{1}<s_{2}<\ldots<s_{k}\leq \frac{1-4v+4v^2}{1-3v+3v^2}$ for arbitrary $k$. Thus, we can conclude, there exists feasible values of $s_{k}$ satisfying $\mathcal{J}^k>2$ for small $\delta$ and arbitrary $k$. In the following proofs of  $(i)$ and $(ii)$ are given.
\end{proof}

\subsubsection*{Proof of (i) -- Monotonicity of $s_l$ given by Eq.~(\ref{genalphak}):} \label{monotone}

\begin{proof}
Considering the following relation:
\begin{eqnarray}
\frac{s_{l}}{s_{l-1}} &=& \frac{2^{l-1}\qty[1-\cos\delta\ \prod\limits_{j=0}^{l-1} \qty(1-\frac{s_j^2 }{16v(1-v)})]}{2^{l-2}\qty[1-\cos\delta\ \prod\limits_{j=0}^{l-2} \qty(1-\frac{s_j^2 }{16v(1-v)})]} \ ; \ \ l\geq 2 \nonumber \\
&=& 2 \ \frac{1-\cos\delta\qty(1-\frac{s_{l-1}^2 }{16v(1-v)}) \prod\limits_{j=0}^{l-2} \qty(1-\frac{s_j^2 }{16v(1-v)})}{1-\cos\delta\ \prod\limits_{j=0}^{l-2} \qty(1-\frac{s_j^2 }{16v(1-v)})} \label{mono1}
\end{eqnarray}
 Therefore, the above Eq.~(\ref{mono1}) reduces to the following 
\begin{equation}
    \frac{s_{l}}{s_{l-1}} > 2 \ \frac{1-\cos\delta \prod\limits_{j=0}^{l-2} \qty(1-\frac{s_j^2 }{16v(1-v)})}{1-\cos\delta\ \prod\limits_{j=0}^{l-2} \qty(1-\frac{s_j^2 }{16v(1-v)})}
\end{equation}
 
Thus, in the range $v\in \qty(0.0580,\frac{1}{2})$ and $s_l\in \Big(0,\frac{1-4v+4v^2}{1-3v+3v^2}\Big]$, we have $s_{l} > 2s_{l-1}$, implying $s_{l}> s_{l-1} > s_{l-2} >\ldots>s_{1}>0$. 
\end{proof}

%%%%%%%%%%%%%%%%%%%%%%%%%%%%%%%%%%%%%%%%%%%%%%%%%%%%%%%%%%%%%%%%%%%%%%%%%%%%%%%%%%%%%%%%%%%%%%%%%%%%%%%%%%%%%%%%%%%%%%%%%%%%%%%%%%%%%%%%%%%%%%%%%%%%%%%%%%%%%%%%%%%%%%%%%%%%%%%%%%%%%%%%%%%%%%%%%%%%%%%%%%%%%%%%%%%%%%%%%%%%%%%%%%%%%%%%%%

\subsubsection*{Proof of (ii) --  $\lim_{m\rightarrow 0^{+}}s_l=0$, where $s_{l}$ is given by Eq.~(\ref{genalphak}):}\label{continue}

\begin{proof}

We consider an upper bound on the sequence given in Eq.~(\ref{genalphak}) and demonstrate that upper bound approaches zero as $m$ approaches zero. This, coupled with positivity of $s_l$ guarantees that $\alpha_{l}$ approaches zero as $\delta$ approaches zero. Since  $\cos \delta> 1-\frac{\delta^{2}}{2}$ and $\sin\delta > \delta$ for $\delta \in (0,\frac{\pi}{4}]$, we have 
\begin{eqnarray}
s_{l} &<& \frac{2^{l-1}(1+\epsilon)}{\delta} \ \qty[1-\qty(1-\frac{\delta^{2}}{2}) \ \prod\limits_{j=1}^{l-1} \qty(1-\frac{s_j^2 }{16v(1-v)})]
\end{eqnarray}
Now, let us define a new sequence
\begin{equation}
\beta_{l}(\delta) =\begin{cases}
    \frac{2^{l-1}(1+\epsilon)}{\delta} \ \qty[1-(1-\frac{\delta^{2}}{2}) \ \prod\limits_{j=1}^{l-1} \qty(1-\frac{\beta_j^2 }{16v(1-v)})] & \ \text{if} \ \beta_{l-1}<1 \\
    \infty & \ \text{otherwise}
\end{cases}
\end{equation}
Here $\beta_{1}=(1+\epsilon)\frac{\delta}{2}$. Similar procedure to (\ref{monotone}) yields $\beta_{l}>\beta_{l-1}>\ldots>\beta_{1}$. Positivity of $s_{l}$ and $s_{l}<\beta_{l}$ ensures, $\lim_{\delta\rightarrow 0^{+}}\beta_{l}=0$ implies $\lim_{\delta\rightarrow 0^{+}}s_{l}=0$. Next, in the following, by using the inductive argument, we show that $\lim_{\delta\rightarrow 0^{+}}\beta_{l}=0$.

To begin with, it is evident that $\lim_{\delta\rightarrow 0^{+}}\beta_{1}=0$. Then, for $l=2$, we have
\begin{equation}
\beta_{2} = 2(1+\epsilon)\Bigg[\frac{\delta}{2}\qty(1+\frac{(1+\epsilon)^{2}}{32v(1-v)})+\frac{(1+\epsilon)^{2}\delta^{3} }{128v(1-v)}\Bigg] \in \mathscr{P}_1 (\delta)
\end{equation}
where $\mathscr{Poly}_{n}(\delta)$ is a set of polynomial functions of $\delta$ with the lowest power of $\delta$ being $n$. A polynomial with constant term is within the set $\mathscr{Poly}_{0}(\delta)$. Now, let us assume that $\beta_j \in \mathscr{P}_1 (\delta) \ \forall j\in \{1,2, ..., l-1\}$. Next, we show that $\beta_{l}\in\mathscr{Poly}_{1}(\delta)$.  

Note that, since $\frac{\beta_{j}^{2}}{16v(1-v)}\in \mathscr{Poly}_{2}(\delta)$, the product $\prod\limits_{j=1}^{l-1} \qty(1-\frac{\beta_j^2 }{16v(1-v)})$ is expressed as $\qty[1-\mathcal{P}_2(\delta)]$, where $\mathcal{P}_2(\delta) \in \mathscr{Poly}_{2}(\delta)$. Then $\beta_l$ is given by
\begin{eqnarray}
    \beta_{l} &=&  \frac{2^{l-1}(1+\epsilon)}{\delta} \ \Bigg[1-\qty(1-\frac{\delta^{2}}{2})\Bigg(1-\mathcal{P}_2(\delta)\Bigg)\Bigg] \nonumber \\
    &=& 2^{l-1}(1+\epsilon) \ \Bigg[\frac{\delta}{2}+\frac{\mathcal{P}_2(\delta)}{\delta}-\frac{\delta \ \mathcal{P}_2(\delta)}{2}\Bigg] \in \mathscr{P}_1(\delta) \label{final1}
\end{eqnarray}
Now, from above Eq.~(\ref{final1}), it follows that $\lim_{\delta\rightarrow 0^{+}}\beta_{l}=0$. This and $0\leq s_{l}<\beta_{l}$ guarantee $\lim_{m\rightarrow 0^{+}}s_{l}=0$. 
\end{proof}

Since, $s_{l}$ can be made arbitrarily small one can always choose $s_{l}\leq \frac{1-4v+4v^2}{1-3v+3v^2}$. Therefore, completing the proof of existence of $\alpha_{l}$ satisfying $0\leq \frac{2^{l-1}}{\sin\delta} \ \qty[1-\cos\delta\ \prod\limits_{j=1}^{l-1} \qty(1-\frac{s_j^2 }{16v(1-v)})]<s_{l}\leq \frac{1-4v+4v^2}{1-3v+3v^2}$ within the range $0.058<v\leq \half$. Since, the Bell value as well as the lower bound discussed here is symmetric around half, we can replace $v$ by $1-v$ and obtain the full range of the allowed values of $v$ giving infinite sharing. Thus, proving the claim of Theorem $2$ in the main text.

If we follow the same line of reasoning for the $v=0$ case, then it can be shown that $\beta_{l}\in \mathscr{P}_{0}(\delta)$ implying $\lim_{m\rightarrow 0^{+}}\beta_{l}\neq 0$. Thus, it is not possible to prove the existence of infinite sharing of Bell nonlocality in this case following similar arguments.

%%%%%%%%%%%%%%%%%%%%%%%%%%%%%%%%%%%%%%%%%%%%%%%%%%%%%%%%%%%%%%%%%%%%%%%%%%%%%%%%%%%%%%%%%%%%%%%%%%%%%%%%%%%%%%%%%%%%%%%%%%%%%%%%%%%%%%%%%%%%%%%%%%%%%%%%%%%%%%%%%%%%%%%%%%%%%%%%%%%%%%%%%%%%%%%%%%%%%%%%%%%%%%%%%%%%%%%%%%%%%%%%%%%%%%%%%%%%%%%%%%%%%%%%%%%%%%%%%%%%%%%%%%%%%%%%%%%%%%%%%%%%%%%%%%%

  \end{widetext}
%%%%%%%%%%%%%%%%%%%%%%%%%%%%%%%%%%%%%%%%%%%%%%%%%%%%%%%%%%%%%%%%%%%%%%%%%%%%%%%%%%%%%%%%%%%%%%%%%%%%%%%%%%%%%%%%%%%%%%%%%%%%%%%%%%%%%%%%%%%%%%%%%%%%%%%%%%%%%%%%%%%%%%%%%%%%%%%%%%%%%%%%%%%%%%%%%%%%%%%%%%%%%%%%%%%%%%%%%%%%%%%%%%%%%%%%%%%%%%%%%%%%%%%%%%%%%%%%%%%%%%%%%%%%%%%%%%%%%%%%%%%%%%%%%%%%%%%%%%%%%%%%%%%%%%%%%%%%%%%%%%%%%%%%%%%%%%%%%%%%%%%%%%%%%%%%%%%%%%%%%%%%%%%%%%%%%%%%%%%%%%%%%%%%%%%%%%%%%%%%%%%%%%%%%%%%%%%%%%%%%%%%%%%%%%%%%%%%%%%%%%%%%%%%%%%%%%%%%%%%%%%%%%%%
  
 %\bibliographystyle{JHEP}
 \bibliography{Arxiv_Sub_Sharing}

\end{document}